\newcommand{\pair}[2]{\left\langle #1,#2\right\rangle}
\def\vol{\mathrm{vol}}
\newtheorem{propo}{Proposition}
\newtheorem{theorem}{Theorem}
\newtheorem{lem}{Lemma}
\newtheorem{cor}{Corollary}
\theoremstyle{definition}
\newtheorem{defi}{Definition}
\theoremstyle{remark}
\newtheorem{remark}{Remark}
\title{Quantum Field Theory on Curved Noncommutative Spacetimes}
\author{\speaker{Alexander Schenkel}%
       % \thanks{A footnote may follow.}
       \\
        Institut f\"ur Theoretische Physik und Astrophysik\\
  Universit\"at W\"urzburg, Am Hubland, 97074 W\"urzburg, Germany\\
       E-mail: \email{aschenkel@physik.uni-wuerzburg.de}}
\abstract{
We summarize our recently proposed approach to quantum field theory on noncommutative
curved spacetimes. We make use of the Drinfel'd twist deformed differential geometry of Julius Wess and his group
in order to define an action functional for a real scalar field on a twist-deformed time-oriented, connected and 
globally hyperbolic Lorentzian manifold. The corresponding deformed wave operator admits unique deformed retarded and advanced 
Green's operators, provided we pose a support condition on the deformation. The solution space of the deformed wave equation is 
constructed explicitly and can be canonically equipped with a (weak) symplectic structure. 
The quantization of the solution space of the
deformed wave equation is performed using $\ast$-algebras over the ring $\mathbb{C}[[\lambda]]$. As a new result 
we add a proof that there exist symplectic isomorphisms between the deformed and the undeformed symplectic 
$\mathbb{R}[[\lambda]]$-modules. This immediately leads to $\ast$-algebra isomorphisms between the
deformed and the formal power series extension of the undeformed quantum field theory. 
The consequences of these isomorphisms are discussed.
}
\begin{document}

\section{\label{sec:intro}Introduction}
Noncommutative (NC) geometry provides a rich mathematical framework to modify the standard formalism 
of quantum field theory (QFT)
in order to include quantum effects of spacetime itself. There are various proposals of how to combine QFT 
and NC geometry, see e.g.~the review articles \cite{NCQFTrev}, but most of these approaches are restricted to the
NC Euclidean space or to the NC Minkowski spacetime. However, to address questions in NC cosmology and NC black hole physics, where we
expect interesting physical effects to occur, it is essential to formulate QFT on NC curved spacetimes. 

Since QFT on commutative curved spacetimes is formulated in an elegant way using the algebraic approach, see e.g.~the monographs 
\cite{Wald,baer}, it is natural to proceed in this direction also in the NC case. One practical advantage of this formulation
is that it allows for a separate treatment of algebraic aspects of QFTs and issues concerning the choice of quantum state.
Recent approaches in this direction are due to Dappiaggi, Lechner and Morfa-Morales \cite{Dappiaggi:2010hc}
and myself and collaborators \cite{Ohl:2009qe,Schenkel:2010sc,Schenkel:2010jr}. In \cite{Dappiaggi:2010hc} Rieffel deformations
of local algebras of observables along Killing flows have been studied and applied to the construction of a deformed
Dirac field on curved spacetimes. In our approach \cite{Ohl:2009qe,Schenkel:2010sc,Schenkel:2010jr}
we have used methods from NC gravity \cite{Aschieri:2005yw, Aschieri:2005zs} to define an action functional
for a real scalar field on NC curved spacetimes and investigated the deformed wave equation and its solution space in detail
in terms of formal power series. We have shown by an explicit construction that the solution space can be equipped
 with a (weak) symplectic structure, which allows us to quantize these theories in terms of suitable $\ast$-algebras 
over the ring $\mathbb{C}[[\lambda]]$. Some first applications of our approach to QFT on 
NC curved spacetimes have been studied in \cite{Schenkel:2010sc,Schenkel:2010jr}.

The outline of the present proceedings article is as follows: In Section \ref{sec:tools} we collect the required tools
from Drinfel'd twists and their associated NC geometry, which are mainly taken from \cite{Aschieri:2005zs}. We define an action
functional for a real and free scalar field on twist-deformed curved spacetimes in Section \ref{sec:wave}
and derive the corresponding deformed wave operator. The deformed Green's operators are constructed in Section 
\ref{sec:greensoperators} and are applied in Section \ref{sec:symplvec} to construct the space of real solutions
of the deformed wave equation, which is then equipped with a (weak) symplectic structure. The quantization is
performed in Section \ref{sec:canquant}. As a new result we provide in Section \ref{sec:symplecto} 
a theorem that the deformed symplectic $\mathbb{R}[[\lambda]]$-module constructed in Section \ref{sec:symplvec} is isomorphic, 
via a symplectic isomorphism, to the formal power series extension of the undeformed symplectic vector space. 
The consequences for the deformed QFT are discussed in Section \ref{sec:consequences}. As another new element,
we develop throughout this paper a simplification of the formalism \cite{Ohl:2009qe}, where reality properties 
of the deformed QFT are more obvious and which is easier to apply. The equivalence to
 \cite{Ohl:2009qe} is shown in the Appendix \ref{app:equiv}. We conclude in Section \ref{sec:conc}.

\paragraph{Notation:} In this article we work in a formal deformation quantization setting. This means that we replace the
field $\mathbb{K}=\mathbb{R}$ or $\mathbb{C}$ by the commutative and unital ring $\mathbb{K}[[\lambda]]$,
 where $[[\lambda]]$ denotes formal power series in the deformation parameter $\lambda$.
Elements of $\mathbb{K}[[\lambda]]$ are given by $\mathbb{K}[[\lambda]] \ni \beta = \sum\limits_{n=0}^\infty \lambda^n \,\beta_{(n)}$,
where $\beta_{(n)}\in\mathbb{K}$ for all $n$.
 The sum and product on $\mathbb{K}[[\lambda]]$ reads, for all $\beta,\gamma\in\mathbb{K}[[\lambda]]$,
\begin{flalign}
\beta + \gamma := \sum_{n=0}^\infty\lambda^n (\beta_{(n)} + \gamma_{(n)})~,\quad \beta\,\gamma := 
\sum\limits_{n=0}^\infty \lambda^{n} \sum\limits_{m+k=n}\beta_{(m)}\gamma_{(k)}~.
\end{flalign} 

Let $V$ be a vector space over $\mathbb{K}$. Its formal power series
extension $V[[\lambda]]$ can be equipped with a $\mathbb{K}[[\lambda]]$-module structure by defining
\begin{flalign}
v+v^\prime:= \sum\limits_{n=0}^\infty \lambda^n (v_{(n)}+v^\prime_{(n)})~,\quad \beta\,v := \sum\limits_{n=0}^\infty\lambda^{n}
\sum\limits_{m+k=n} \beta_{(m)}\,v_{(k)}~,
\end{flalign}
for all $\beta\in\mathbb{K}[[\lambda]]$ and $v,v^\prime\in V[[\lambda]]$.
A $\mathbb{K}[[\lambda]]$-module homomorphism ($\mathbb{K}[[\lambda]]$-linear map) is a map between
two $\mathbb{K}[[\lambda]]$-modules preserving the $\mathbb{K}[[\lambda]]$-module structure. 
Note that a family of $\mathbb{K}$-linear maps  $P_{(n)}:V\to W$, $n\in \mathbb{N}^0$, induces a 
$\mathbb{K}[[\lambda]]$-linear map $P_\star:V[[\lambda]]\to W[[\lambda]]$ by defining for all $v\in V[[\lambda]]$
\begin{flalign}
\label{eqn:powerseriesmap}
 P_\star(v):= \sum\limits_{n=0}^\infty\lambda^n\,\sum\limits_{m+k=n} P_{(m)}(v_{(k)})~.
\end{flalign}
We shall use the notation $P_\star = \sum \lambda^n P_{(n)}$.
The other way around, let $P_\star:V[[\lambda]]\to W[[\lambda]]$  be a $\mathbb{K}[[\lambda]]$-linear map, then
it gives rise to a family of $\mathbb{K}$-linear maps $P_{(n)}:V\to W$, $n\in\mathbb{N}^0$, defined 
by $P_{(n)}(v):= \bigl(P_\star(v)\bigr)_{(n)}$, for all $v\in V$. $P_\star$ can be expressed in terms of the 
$P_{(n)}$ by (\ref{eqn:powerseriesmap})\footnote{
The $\mathbb{K}[[\lambda]]$-linear maps $P_\star$ and (\ref{eqn:powerseriesmap}) constructed 
by the $P_{(n)}$ are identical when acting on
polynomials $V[\lambda]$. Employing the $\lambda$-adic topology, one finds that 
$V[\lambda]$ is dense in $V[[\lambda]]$ and that all $\mathbb{K}[[\lambda]]$-linear maps $Q:V[[\lambda]]\to W[[\lambda]]$
are continuous. Thus, the maps are  identical on $V[[\lambda]]$.
}.

A (weak) symplectic $\mathbb{R}[[\lambda]]$-module $(W,\rho)$ is an $\mathbb{R}[[\lambda]]$-module
$W$ with an antisymmetric and $\mathbb{R}[[\lambda]]$-bilinear map $\rho:W\times W\to\mathbb{R}[[\lambda]]$, such that
$\rho(\varphi,\psi)=0$ for all $\psi\in W$ implies $\varphi =0$. Similar to \cite{baer} we suppress the term {\it weak}
in the following. An $\mathbb{R}[[\lambda]]$-linear map between symplectic $\mathbb{R}[[\lambda]]$-modules, 
which preserves the symplectic structure, will be simply called a symplectic map.

\section{\label{sec:tools}Basics on Drinfel'd twist deformed differential geometry}
We follow the approach of \cite{Aschieri:2005yw, Aschieri:2005zs} and refer to these works for details.
 Let $\mathcal{M}$ be a smooth manifold
and let $\Xi$ be the complexified vector fields on $\mathcal{M}$. A Drinfel'd twist is an invertible element 
$\mathcal{F}\in \bigl(U\Xi\otimes U\Xi\bigr)[[\lambda]]$, where $U\Xi$ is the universal enveloping algebra of $\Xi$,
satisfying
\begin{subequations}
\begin{flalign}
 &\mathcal{F}_{12}(\Delta\otimes \mathrm{id})\mathcal{F}=\mathcal{F}_{23}(\mathrm{id}\otimes\Delta)\mathcal{F}~,\\
 &(\epsilon\otimes \mathrm{id}) \mathcal{F} = 1 = (\mathrm{id}\otimes\epsilon)\mathcal{F}~,\\
 &\mathcal{F}=1\otimes 1 +\mathcal{O}(\lambda)~,
\end{flalign}
\end{subequations}
where $\mathcal{F}_{12}:= \mathcal{F}\otimes 1 $ and $\mathcal{F}_{23}:=1\otimes \mathcal{F}$. The map $\Delta:U\Xi[[\lambda]]\to
\bigl(U\Xi\otimes U\Xi\bigr)[[\lambda]]$ is the canonical coproduct
 and $\epsilon:U\Xi[[\lambda]] \to \mathbb{C}[[\lambda]]$ is the canonical counit. We denote the inverse twist by
 (sum over $\alpha$ understood)
\begin{flalign}
\mathcal{F}^{-1}=\bar f^\alpha\otimes \bar f_\alpha \in \bigl(U\Xi \otimes U\Xi\bigr)[[\lambda]]~.
\end{flalign}

To simplify our investigations we demand the twist to be real, i.e.~$\mathcal{F}^{\ast\otimes\ast}=(S\otimes S)\mathcal{F}_{21}$,
 and to satisfy $S(\bar f^\alpha)\,\bar f_\alpha=1$, where $S:U\Xi[[\lambda]]
\to U\Xi[[\lambda]]$ is the canonical antipode.
Note that the so-called abelian twists \cite{Reshetikhin:1990ep,Jambor:2004kc}
\begin{flalign}
\mathcal{F}_\text{abelian}=\exp\left(-\frac{i\lambda}{2}\Theta^{ab}X_a\otimes X_b\right)~,
\end{flalign}
where $X_a\in\Xi$ are mutually commuting real vector fields and 
$\Theta^{ab}$ is real, constant and antisymmetric, are part of the Drinfel'd twists we consider.

The commutative algebra of smooth complex valued functions $\mathcal{A}=\bigl(C^\infty(\mathcal{M}),\cdot\bigr)$
 is deformed into a NC associative algebra  $\mathcal{A}_\star:= \bigl(C^\infty(\mathcal{M})[[\lambda]],\star\bigr)$ 
 by introducing the $\star$-product
 \begin{flalign}
 h\star k := \bar f^\alpha( h )\cdot \bar f_\alpha( k )~,
 \end{flalign}
 for all $h,k\in C^\infty(\mathcal{M})[[\lambda]]$.
 The vector fields act on functions via the Lie derivative.
 Due to the reality property of the twist the $\star$-product is hermitian, i.e.~$(h\star k)^\ast= k^\ast\star h^\ast$, for all
 $h,k\in \mathcal{A}_\star$.

Similarly, we deform the differential calculus $(\Omega^\bullet,\wedge,d)$ of differential forms on $\mathcal{M}$
  into a deformed differential calculus $\Omega^\bullet_\star:=(\Omega^\bullet[[\lambda]],\wedge_\star,d)$. The deformed wedge product is defined by
\begin{flalign}
\omega\wedge_\star \omega^\prime := \bar f^\alpha( \omega )\wedge  \bar f_\alpha ( \omega^\prime )~,
\end{flalign}
for all $\omega, \omega^\prime\in \Omega^\bullet[[\lambda]]$. The vector fields act via the Lie derivative on differential forms.
It turns out that the undeformed exterior differential $d$ satisfies the Leibniz rule
\begin{flalign}
d(\omega\wedge_\star \omega^\prime) = (d\omega)\wedge_\star \omega^\prime + (-1)^{\text{deg}(\omega)} \omega\wedge_\star(d\omega^\prime)~,
\end{flalign}
for all $\omega,\omega^\prime\in\Omega^\bullet[[\lambda]]$, since Lie derivatives commute with $d$. 
Note that the space of formal power series of $n$-forms $\Omega_\star^n:=\Omega^{n}[[\lambda]]$ is an $\mathcal{A}_\star$-bimodule,
where the left and right $\mathcal{A}_\star$-action is provided by the deformed  wedge product.

We deform the vector fields $\Xi$ into an $\mathcal{A}_\star$-bimodule of deformed vector fields $\Xi_\star := \Xi[[\lambda]]$ by 
employing the deformed left and right $\mathcal{A}_\star$-action
\begin{flalign}
h\star v:=\bar f^\alpha(h)\,\bar f_\alpha(v)~,\quad v\star h:=\bar f^\alpha(v)\,\bar f_\alpha(h)~,
\end{flalign}
for all $h\in\mathcal{A}_\star$ and $v\in\Xi_\star$. 
The action of the twist  on $\Xi[[\lambda]]$ is given by the Lie derivative.
The duality pairing between vector fields and one-forms can also be deformed by
\begin{flalign}
\pair{v}{\omega}_\star:= \pair{\bar f^\alpha(v)}{\bar f_\alpha(\omega)}~,
\end{flalign}
for all $v\in\Xi_\star$ and $\omega\in\Omega^1_\star$. The $\star$-pairing 
with $v$ on the right and $\omega$ on the left is defined analogously. One obtains the following relevant property
by using identities of the twist
\begin{flalign}
\label{eqn:pairinglinearity}
\pair{h\star v\star k}{\omega\star l}_\star=h\star\pair{v}{k \star\omega}_\star \star l~,
\end{flalign}
for all $h,k,l\in\mathcal{A}_\star$, $v\in\Xi_\star$ and $\omega\in \Omega^1_\star$.

Employing the $\star$-tensor product
\begin{flalign}
\tau\otimes_\star\tau^\prime := \bar f^\alpha(\tau)\otimes\bar f_\alpha(\tau^\prime)~,
\end{flalign}
we can deform the tensor algebra $(\mathcal{T},\otimes)$ over $\mathcal{A}$
generated by $\Xi$ and $\Omega^1$ into the tensor algebra $(\mathcal{T}_\star,\otimes_\star)$ over $\mathcal{A}_\star$ 
generated by $\Xi_\star$ and $\Omega^1_\star$.
The relation (\ref{eqn:pairinglinearity}) extends to the $\star$-tensor algebra 
\begin{flalign}
\pair{\tau\otimes_\star v\star h}{\omega\otimes_\star\tau^\prime}_\star= \tau\otimes_\star \pair{v}{h\star \omega}_\star \star\tau^\prime~,
\end{flalign}
for all $h\in\mathcal{A}_\star$, $v\in\Xi_\star$, $\omega\in \Omega^1_\star$ and $\tau,\tau^\prime\in\mathcal{T}_\star$.

The integral over top-forms $\omega\in\Omega_\star^{\text{dim}(\mathcal{M})}$ is defined by 
$\int_\mathcal{M}\omega := \sum\lambda^n\int_\mathcal{M}\omega_{(n)}$.
Due to the assumption $S(\bar f^\alpha)\,\bar f_\alpha=1$ the integral satisfies the graded cyclicity property
\cite{Aschieri:2009ky}
\begin{flalign}
\label{eqn:gradcyc}
\int_\mathcal{M} \omega\wedge_\star \omega^\prime = (-1)^{\text{deg}(\omega) \text{deg}(\omega^\prime)}\int_\mathcal{M} \omega^\prime \wedge_\star \omega = \int_\mathcal{M} \omega\wedge \omega^\prime~,
\end{flalign}
for all $\omega,\omega^\prime\in \Omega_\star^\bullet$ with 
$\text{deg}(\omega)+\text{deg}(\omega^\prime)=\text{dim}(\mathcal{M})$ and $\text{supp}(\omega)\cap\text{supp}(\omega^\prime)$
 compact\footnote{%
Let $\omega := \sum \lambda^n \omega_{(n)}\in\Omega^\bullet[[\lambda]]$ and $\omega^\prime := \sum \lambda^n \omega^\prime_{(n)}\in\Omega^\bullet[[\lambda]]$.
The statement $\text{supp}(\omega)\cap\text{supp}(\omega^\prime)$ compact is an abbreviation for
 $\text{supp}(\omega_{(n)})\cap\text{supp}(\omega^\prime_{(m)})$ compact for all $n,m\in\mathbb{N}^0$.
}.

\section{\label{sec:wave}Deformed action functional and wave operator for a free and real scalar field}
Using the tools of Section \ref{sec:tools} we are in the position to construct a 
deformed action functional for a free and real  scalar field $\Phi$.
For this we additionally require a metric field and a volume form. 
Consider a classical Lorentzian manifold $(\mathcal{M},g)$ with metric field $g\in \Omega^1\otimes\Omega^1$ and corresponding
volume form $\vol_g\in \Omega^{\text{dim}(\mathcal{M})}$. To be as general as possible we 
consider as the deformed metric field and volume form elements $g_\star\in \Omega_\star^1\otimes_\star\Omega_\star^1$
and $\vol_\star\in\Omega_\star^{\text{dim}(\mathcal{M})}$ only subject to the conditions
$g_\star^{\ast}=g_\star$, $\vol_\star^\ast=\vol_\star$, $g_\star\vert_{\lambda=0}=g$ and $\vol_\star\vert_{\lambda=0}=\vol_g$.
The $\star$-inverse deformed metric field $g^{-1}_\star\in \Xi_\star\otimes_\star\Xi_\star$, defined
by $\langle g^{-1}_\star,\langle g_\star,v\rangle_\star\rangle_\star=v$ and 
$\langle g_\star,\langle g^{-1}_\star,\omega\rangle_\star\rangle_\star=\omega$ for all $v\in\Xi_\star$ and
 $\omega\in\Omega_\star^1$,
 exists, is hermitian and satisfies $g^{-1}_\star\vert_{\lambda=0}=g^{-1}$.
We follow \cite{Ohl:2009qe} and define
\begin{flalign}
\label{eqn:action}
 S_\star[\Phi] :=-\frac{1}{2}\int_\mathcal{M}\Bigl(\pair{\pair{d\Phi}{g^{-1}_\star}_\star}{d\Phi}_\star + M^2 \,\Phi\star\Phi\Bigr)\star \text{vol}_\star~.
\end{flalign}
Varying this action by functions $\delta\Phi$ of compact support we obtain the top-form valued wave operator 
\begin{flalign}
\label{eqn:topeom}
\tilde P_\star(\varphi) := \frac{1}{2}\Bigl(\square_\star(\varphi)\star \vol_\star + \vol_\star\star \bigl(\square_\star(\varphi^\ast) \bigr)^\ast - M^2 \varphi\star \vol_\star - M^2\vol_\star\star\varphi\Bigr)~.
\end{flalign}
Employing the $\mathbb{C}[[\lambda]]$-module isomorphism 
$\star_g: C^\infty(\mathcal{M})[[\lambda]]\to\Omega^{\text{dim}(\mathcal{M})}[[\lambda]]\,,~h\mapsto h\,\vol_g$, 
given by the {\it undeformed} Hodge operator corresponding to $g$ we define a scalar valued operator 
$P_\star: C^\infty(\mathcal{M})[[\lambda]]\to C^\infty(\mathcal{M})[[\lambda]]$ by $P_\star:=\star_g^{-1}\circ \tilde P_\star$.
The wave operator $P_\star$ is formally self-adjoint with respect to the {\it undeformed} scalar product
\begin{flalign}
\label{eqn:scalarproduct}
\bigl(\varphi,\psi\bigr) := \int_\mathcal{M} \varphi^\ast\,\psi\,\vol_g~.
\end{flalign}
More precisely, 
\begin{flalign}
\label{eqn:selfadjoint}
\bigl(\varphi,P_\star(\psi)\bigr) = \bigl(P_\star(\varphi),\psi\bigr)~
\end{flalign}
holds true for all 
$\varphi,\psi \in C^\infty(\mathcal{M})[[\lambda]]$ with $\text{supp}(\psi)\cap\text{supp}(\varphi) $
 compact. 
\begin{remark}
\label{rem:conventions}
The definition of the scalar valued wave operator is different to \cite{Ohl:2009qe}. In \cite{Ohl:2009qe} we have used
the {\it deformed} Hodge operator $\star_\star:C^\infty(\mathcal{M})[[\lambda]]\to\Omega^{\text{dim}(\mathcal{M})}[[\lambda]]\,,~
h\mapsto h\star\vol_\star$, to extract a scalar valued operator from $\tilde P_\star$ (\ref{eqn:topeom}). 
The resulting operator was formally self-adjoint with respect to the deformed scalar product 
$(\varphi,\psi)_\star=\int_\mathcal{M}\varphi^\ast\star\psi\star\vol_\star$. Since the map
$\iota:=\star_g^{-1} \circ \star_\star:C^\infty(\mathcal{M})[[\lambda]]\to C^\infty(\mathcal{M})[[\lambda]]$ 
is a $\mathbb{C}[[\lambda]]$-module isomorphism, the theory we are going to construct in the following is completely equivalent 
to the one in \cite{Ohl:2009qe}, but has the advantage that reality properties of the deformed QFT are 
more obvious in the present formulation. We show the equivalence of both formulations in the Appendix \ref{app:equiv}.
\end{remark}

\section{\label{sec:greensoperators}The deformed Green's operators}
In this section we assume $(\mathcal{M},g)$ to be a time-oriented, connected and globally hyperbolic Lorentzian manifold.
Guided by Section \ref{sec:wave} we consider  $\mathbb{C}[[\lambda]]$-linear maps
 $P_\star:C^\infty(\mathcal{M})[[\lambda]]\to C^\infty(\mathcal{M})[[\lambda]]$,
 which are deformations of normally hyperbolic operators, i.e.~$P_\star\vert_{\lambda=0}$ is normally hyperbolic.
 This allows us to treat also more general wave operators than the deformed Klein-Gordon operators of Section \ref{sec:wave}.
 We can write $P_\star$ as a formal power series of operators
 \begin{flalign}
 P_\star := \sum\limits_{n=0}^\infty \lambda^n P_{(n)}~.
\end{flalign}
 From the considerations in Section \ref{sec:wave} we find that it is natural to demand $P_\star$ to be formally self-adjoint
 with respect to the undeformed scalar product (\ref{eqn:scalarproduct}) and to be real,
  i.e.~$\left(P_\star(\varphi)\right)^\ast=P_\star(\varphi^\ast)$ for all $\varphi\in C^\infty(\mathcal{M})[[\lambda]]$.
  Furthermore, we demand $P_{(n)}$ to be finite-order differential operators for all $n\in \mathbb{N}$.

An interesting question is the existence and uniqueness of deformed  Green's operators
$\Delta_{\star\pm}:C^\infty_0(\mathcal{M})[[\lambda]]\to C^\infty(\mathcal{M})[[\lambda]]$ corresponding to $P_\star$.
These operators will play an important role in the construction of the deformed QFT. 
Based on the strong results for the $\lambda^0$-part \cite{baer}, we have shown in \cite{Ohl:2009qe}
that the deformed Green's operators exist, provided we assume a support condition on the NC corrections $P_{(n)}$, $n>0$.
\begin{theorem}[\cite{Ohl:2009qe}]
\label{theo:green}
 Let $(\mathcal{M},g)$ be a  time-oriented, connected and globally hyperbolic Lorentzian
manifold. Let $P_\star:=\sum\lambda^n P_{(n)}$ be a formal deformation of a normally 
hyperbolic operator acting on $C^\infty(\mathcal{M})[[\lambda]]$, where for $n>0$ the
$P_{(n)}:C^\infty(\mathcal{M})\to C^\infty_0(\mathcal{M})$  are finite-order differential operators.
Then there exist unique deformed Green's operators $\Delta_{\star\pm}:=\sum \lambda^n\Delta_{(n)\pm}$ satisfying
\begin{subequations}
\label{eqn:green}
\begin{flalign}
\label{eqn:green1}
& P_\star\circ\Delta_{\star\pm} = \text{id}_{C_0^\infty(\mathcal{M})}~,\\\label{eqn:green2}
& \Delta_{\star\pm} \circ P_\star\big\vert_{C_0^\infty(\mathcal{M})[[\lambda]]} = \text{id}_{C_0^\infty(\mathcal{M})}~,\\\label{eqn:green3}
& \text{supp}(\Delta_{(n)\pm}(\varphi))\subseteq J_{\pm}(\text{supp}(\varphi))~,\quad \text{for all~}n\in\mathbb{N}^0~\text{and~}\varphi\in C_0^\infty(\mathcal{M})~,
\end{flalign}
\end{subequations}
where $J_\pm$ is the causal future/past with respect to the classical metric $g$.\\
The explicit expressions for $\Delta_{(n)\pm}$, $n>0$, read
\begin{flalign}
\label{eqn:explicitgreen}
\Delta_{(n)\pm} = \sum\limits_{k=1}^{n}\sum\limits_{j_1=1}^n \dots\sum\limits_{j_k=1}^n(-1)^k \delta_{j_1+\dots+j_k, n} ~
  \Delta_{\pm}\circ P_{(j_1)}\circ \Delta_\pm \circ P_{(j_2)} \circ \dots \circ \Delta_\pm \circ P_{(j_k)}\circ \Delta_{\pm}~,
\end{flalign}
where $\delta_{n,m}$ is the Kronecker-delta.
\end{theorem}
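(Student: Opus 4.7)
The plan is to build $\Delta_{\star\pm}$ order by order in $\lambda$, taking as the zeroth-order input the classical retarded/advanced Green's operators $\Delta_{(0)\pm}$ of the normally hyperbolic $P_{(0)}$, whose existence, uniqueness, and support property $\text{supp}(\Delta_{(0)\pm}(\varphi))\subseteq J_\pm(\text{supp}(\varphi))$ are the standard results on globally hyperbolic manifolds cited from \cite{baer}. Expanding $P_\star\circ\Delta_{\star\pm}=\text{id}$ and reading off the coefficient of $\lambda^n$ for $n\geq 1$ yields $P_{(0)}\circ\Delta_{(n)\pm}=-\sum_{j=1}^n P_{(j)}\circ\Delta_{(n-j)\pm}$. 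The hypothesis that each $P_{(j)}$, $j>0$, takes values in $C_0^\infty(\mathcal{M})$ is exactly what is needed to keep the right-hand side compactly supported and hence in the domain of $\Delta_{(0)\pm}$, so one can \emph{define} $\Delta_{(n)\pm}:=-\sum_{j=1}^n\Delta_{(0)\pm}\circ P_{(j)}\circ\Delta_{(n-j)\pm}$. Unwinding this recursion produces the explicit formula (\ref{eqn:explicitgreen}); equivalently, the whole construction is the $\lambda$-adically convergent Neumann series $\Delta_{\star\pm}=\sum_{k=0}^\infty(-\Delta_{(0)\pm}\circ P_{>0})^k\circ\Delta_{(0)\pm}$ with $P_{>0}:=\sum_{n\geq 1}\lambda^n P_{(n)}$, convergence being automatic since $P_{>0}=\mathcal{O}(\lambda)$.

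Next I would verify (\ref{eqn:green1})--(\ref{eqn:green3}). Equation (\ref{eqn:green1}) holds by construction. For the support condition (\ref{eqn:green3}) I would induct on $n$: the $n=0$ case is classical, and in the step the differential operators $P_{(j)}$ never enlarge supports, so by the inductive hypothesis $\text{supp}(P_{(j)}\,\Delta_{(n-j)\pm}(\varphi))\subseteq J_\pm(\text{supp}(\varphi))$, and a further application of $\Delta_{(0)\pm}$ stays inside $J_\pm(\text{supp}(\varphi))$ because $J_\pm(J_\pm(A))\subseteq J_\pm(A)$. For the left-inverse identity (\ref{eqn:green2}) I would either repeat the same inductive scheme starting from the analogous order-$n$ relation $\sum_{j+k=n}\Delta_{(j)\pm}\circ P_{(k)}=0$, or deduce it from (\ref{eqn:green1}) by transposing with the undeformed pairing and using the formal self-adjointness of $P_\star$ encoded in (\ref{eqn:selfadjoint}); the Neumann-series expression in any case makes both identities manifest by a symmetric algebraic cancellation.

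Uniqueness follows by a standard induction. If $\Delta_{\star\pm}'$ is a second solution and $D_\star:=\Delta_{\star\pm}-\Delta_{\star\pm}'$, then $P_\star\circ D_\star=0$ together with the support control $\text{supp}(D_{(n)}(\varphi))\subseteq J_\pm(\text{supp}(\varphi))$ reduces at each order to $P_{(0)}\circ D_{(n)}=R_n$, where $R_n$ is a compactly supported expression built inductively from the $D_{(m)}$ with $m<n$ and from the $P_{(j)}$ with $j>0$; applying the classical $\Delta_{(0)\pm}$ on the left and using the classical uniqueness of Green's operators forces $D_{(n)}=0$ order by order. The main obstacle I expect is not geometric but combinatorial bookkeeping: one has to invoke the compact-support hypothesis on $P_{(j)}$, $j>0$, at precisely the right step so that every composition in the recursive formula is well-defined, and one has to ensure that both the left- and the right-inverse identities are preserved simultaneously at every order, since the recursive definition naively privileges the right-inverse relation.
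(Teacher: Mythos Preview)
The paper does not actually prove this theorem; it is stated with attribution to \cite{Ohl:2009qe} and no proof is given in the present article. Your proposal is therefore not being compared against anything in this paper, but it is the standard and correct argument: construct $\Delta_{(n)\pm}$ recursively via the Neumann series, use the compact-support hypothesis on $P_{(j)}$ ($j>0$) to keep every composition well-defined, and induct on the support property using $J_\pm(J_\pm(A))\subseteq J_\pm(A)$.

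Two small points are worth tightening. First, your alternative route to (\ref{eqn:green2}) via the formal self-adjointness (\ref{eqn:selfadjoint}) is not available at the level of generality of the theorem as stated, which does not assume $P_\star$ to be self-adjoint; you should rely on the direct Neumann-series cancellation $\Delta_{\star\pm}\circ P_\star=\bigl(\sum_k(-\Delta_\pm P_{>0})^k\Delta_\pm\bigr)(P+P_{>0})=\text{id}$ on $C_0^\infty(\mathcal{M})[[\lambda]]$, which you already mention. Second, in the uniqueness step the phrase ``applying the classical $\Delta_{(0)\pm}$ on the left'' is not quite the right mechanism, since $D_{(n)}(\varphi)$ is not compactly supported and $\Delta_\pm\circ P$ is only the identity on compactly supported functions. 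The correct classical input is that a solution of $P_{(0)}u=0$ with support in $J_\pm(K)$ for some compact $K$ must vanish (this is the uniqueness half of the Green's-operator theorem in \cite{baer}); your parenthetical ``classical uniqueness of Green's operators'' is the right fact, just invoke it directly rather than through a left composition with $\Delta_{(0)\pm}$.
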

The support condition $P_{(n)}:C^\infty(\mathcal{M})\to C^\infty_0(\mathcal{M})$ is a remnant of the formal power series
construction and it is sufficient to make all compositions in (\ref{eqn:explicitgreen}) well-defined. Similar issues occur in
perturbative QFT, where one has to assume the interaction to be of compact support. For the deformed Klein-Gordon
operator of Section \ref{sec:wave} the support condition is 
satisfied for the following two scenarios: 1.) we deform by twists of compact support 2.) the vector fields entering the twist
are asymptotic (outside a region of compact support) Killing vector fields of $g_\star$. In other cases we require, similar to
perturbative QFT, an infrared regularization in terms of compactly supported deformations.

Next, we study properties of the deformed retarded-advanced Green's operator given by the $\mathbb{C}[[\lambda]]$-linear map
\begin{flalign}
\Delta_{\star} := \Delta_{\star +} - \Delta_{\star -}:C_0^\infty(\mathcal{M})[[\lambda]] \to C_\mathrm{sc}^\infty(\mathcal{M})[[\lambda]]~,
\end{flalign}
where $C_\mathrm{sc}^\infty(\mathcal{M})$ are the functions of spatially compact support. 
The importance of this map lies in the fact that it defines the covariant Poisson 
bracket relations (i.\,e.~the Peierls bracket relations) of classical field theory and
 the canonical commutation relations of QFT.
 We have obtained the following
\begin{theorem}[\cite{Ohl:2009qe}]
\label{theo:complex}
 Let $(\mathcal{M},g)$ be a  time-oriented, connected and globally hyperbolic Lorentzian manifold 
and let $P_\star$ and $\Delta_{\star\pm}$ be as above. Then the sequence of $\mathbb{C}[[\lambda]]$-linear maps
\begin{flalign}
\label{eqn:complex}
 0\longrightarrow C_0^\infty(\mathcal{M})[[\lambda]] \stackrel{P_\star}{\longrightarrow} C_0^\infty(\mathcal{M})[[\lambda]] \stackrel{\Delta_\star}{\longrightarrow} C_\mathrm{sc}^\infty(\mathcal{M})[[\lambda]] \stackrel{P_\star}{\longrightarrow}C_\mathrm{sc}^\infty(\mathcal{M})[[\lambda]]
\end{flalign}
is a complex, which is exact everywhere.
\end{theorem}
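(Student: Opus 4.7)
I would verify exactness of the sequence (\ref{eqn:complex}) in four steps: injectivity of the first $P_\star$, the complex property at both middle terms, exactness at the second $C_0^\infty(\mathcal{M})[[\lambda]]$, and exactness at $C_\mathrm{sc}^\infty(\mathcal{M})[[\lambda]]$. The first three of these are essentially formal consequences of Theorem \ref{theo:green}. Writing $\Delta_\star=\Delta_{\star+}-\Delta_{\star-}$ and combining (\ref{eqn:green1}) with (\ref{eqn:green2}) shows that both compositions $\Delta_\star\circ P_\star$ and $P_\star\circ\Delta_\star$ vanish on their appropriate domains, and $P_\star\varphi=0$ for $\varphi\in C_0^\infty(\mathcal{M})[[\lambda]]$ gives $\varphi=\Delta_{\star+}(P_\star\varphi)=0$.

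For exactness at the second $C_0^\infty(\mathcal{M})[[\lambda]]$, I would exploit the support condition (\ref{eqn:green3}). If $\varphi\in C_0^\infty(\mathcal{M})[[\lambda]]$ satisfies $\Delta_\star\varphi=0$, set $\psi:=\Delta_{\star+}\varphi=\Delta_{\star-}\varphi$. Order by order in $\lambda$, the coefficient $\psi_{(n)}=\sum_{k=0}^{n}\Delta_{(k)+}\varphi_{(n-k)}$ has support in $J_+\bigl(\bigcup_{k}\text{supp}(\varphi_{(n-k)})\bigr)$ and simultaneously in the analogous past cone. Global hyperbolicity of $(\mathcal{M},g)$ makes the intersection of these two causal cones compact, so $\psi\in C_0^\infty(\mathcal{M})[[\lambda]]$, and (\ref{eqn:green1}) then gives $P_\star\psi=\varphi$.

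The main obstacle is exactness at $C_\mathrm{sc}^\infty(\mathcal{M})[[\lambda]]$. Given $\chi\in C_\mathrm{sc}^\infty(\mathcal{M})[[\lambda]]$ with $P_\star\chi=0$, I would construct $\varphi=\sum\lambda^n\varphi_{(n)}\in C_0^\infty(\mathcal{M})[[\lambda]]$ with $\Delta_\star\varphi=\chi$ recursively in $n$. At order zero, the classical theorem of \cite{baer} applied to the normally hyperbolic operator $P_{(0)}$ produces $\varphi_{(0)}\in C_0^\infty(\mathcal{M})$ with $\Delta_{(0)}\varphi_{(0)}=\chi_{(0)}$. Assuming $\varphi_{(0)},\dots,\varphi_{(n-1)}$ have been chosen so that $(\Delta_\star\varphi)_{(m)}=\chi_{(m)}$ for $m<n$, I would seek $\varphi_{(n)}\in C_0^\infty(\mathcal{M})$ solving the classical equation
\begin{equation*}
\Delta_{(0)}\varphi_{(n)} \;=\; \chi_{(n)} - \sum_{k=1}^{n}\Delta_{(k)}\varphi_{(n-k)} \;=:\; \xi_n .
\end{equation*}
By the classical theory, such a $\varphi_{(n)}$ exists provided $\xi_n$ is spatially compact and $P_{(0)}\xi_n=0$. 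Spatial compactness of $\xi_n$ follows from (\ref{eqn:green3}) together with compactness of $\text{supp}(\varphi_{(n-k)})$ and spatial compactness of $\chi_{(n)}$.

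The identity $P_{(0)}\xi_n=0$ is the only genuinely nontrivial computation. I would obtain it by expanding both $P_\star\chi=0$ and $P_\star\circ\Delta_\star=0$ at order $n$, substituting the results into $P_{(0)}\xi_n$, reindexing the resulting double sum over $(k,j)$ as a single sum $\sum_{j\ge 1} P_{(j)}\bigl[\sum_{l}\Delta_{(l)}\varphi_{(n-j-l)}-\chi_{(n-j)}\bigr]$, and then invoking the induction hypothesis to see that each bracket vanishes. This telescoping closes the induction and completes the proof.
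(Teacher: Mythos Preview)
The paper does not supply its own proof of Theorem \ref{theo:complex}; it is quoted from \cite{Ohl:2009qe} without argument, so there is nothing in the present text to compare against. Your proposal is a correct, self-contained proof: the first three steps are indeed immediate from (\ref{eqn:green1})--(\ref{eqn:green3}) and global hyperbolicity, and your inductive construction for exactness at $C_\mathrm{sc}^\infty(\mathcal{M})[[\lambda]]$ goes through exactly as you describe---the telescoping identity $P_{(0)}\xi_n=\sum_{j\ge 1}P_{(j)}\bigl[(\Delta_\star\varphi)_{(n-j)}-\chi_{(n-j)}\bigr]=0$ follows from expanding $P_\star\chi=0$ and $P_\star\circ\Delta_\star=0$ at order $n$ and reindexing, just as you say.

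One remark on presentation: in your treatment of exactness at the second $C_0^\infty(\mathcal{M})[[\lambda]]$ you could make explicit that $\bigcup_{k\le n}\mathrm{supp}(\varphi_{(k)})$ is a \emph{finite} union of compact sets, hence compact, so that $J_+(\cdot)\cap J_-(\cdot)$ is compact by global hyperbolicity; and in the last step you are implicitly invoking the classical exact sequence of \cite{baer} for $P_{(0)}$ at every order, which is worth stating once. Neither of these is a gap, only a matter of clarity.
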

This theorem provides us information on the solution space of the deformed wave equation, i.e.~the functions
$\Phi\in C^\infty_\mathrm{sc}(\mathcal{M})[[\lambda]]$ satisfying $P_\star(\Phi)=0$. Combining this information,
 we find that the factor  space $C^\infty_0(\mathcal{M})[[\lambda]]/P_\star[C^\infty_0(\mathcal{M})[[\lambda]]]$ is isomorphic,
  as a $\mathbb{C}[[\lambda]]$-module, to the space of complex solutions of the deformed wave equation $\text{Sol}_{P_\star}^\mathbb{C}$.
 The isomorphism is given by the map 
 \begin{flalign}
 \mathcal{I}_\star^{\mathbb{C}}: C^\infty_0(\mathcal{M})[[\lambda]]/P_\star[C^\infty_0(\mathcal{M})[[\lambda]]]\to \text{Sol}_{P_\star}^\mathbb{C}~,\quad [\varphi]\mapsto\Delta_\star(\varphi)~,
\end{flalign}
which is well-defined since $\Delta_\star\circ P_\star =0$ when acting on $C^\infty_0(\mathcal{M})[[\lambda]]$.

\section{\label{sec:symplvec}The space of real solutions of the deformed wave equation}
Since our purpose is to describe a real scalar field we have to restrict $\text{Sol}_{P_\star}^\mathbb{C}$ to the space of real
solutions, denoted by $\text{Sol}_{P_\star}^\mathbb{R}$. Due to our new definition of
the deformed scalar valued wave operator, see Remark \ref{rem:conventions}, this task turns out to be simpler
 than the construction presented in \cite{Ohl:2009qe}. Since the wave operator $P_\star$ and the Green's operators $\Delta_{\star\pm}$
 are real, we find the isomorphism
 \begin{flalign}
 \label{eqn:realiso}
 \mathcal{I}_\star: C^\infty_0(\mathcal{M},\mathbb{R})[[\lambda]]/P_\star[C^\infty_0(\mathcal{M},\mathbb{R})[[\lambda]]]\to \text{Sol}_{P_\star}^\mathbb{R}~,\quad [\varphi]\mapsto\Delta_\star(\varphi)~.
 \end{flalign}
 We define $V_\star:=C^\infty_0(\mathcal{M},\mathbb{R})[[\lambda]]/P_\star[C^\infty_0(\mathcal{M},\mathbb{R})[[\lambda]]]$.
 
The next step is to show that $V_\star$ carries a symplectic structure. We define the $\mathbb{R}[[\lambda]]$-bilinear map
\begin{flalign}
\label{eqn:symplectic}
\omega_\star: V_\star\times V_\star\to \mathbb{R}[[\lambda]]~,\quad ([\varphi],[\psi])\mapsto
\omega_\star([\varphi],\,[\psi])=\bigl(\varphi,\Delta_\star(\psi)\bigr)~,
\end{flalign} 
where we have employed the scalar product (\ref{eqn:scalarproduct}).
Analogously to \cite{Ohl:2009qe} one shows that $\omega_\star$ is weakly non-degenerate and antisymmetric.

We summarize the result of this section in the following
\begin{propo}
Let $(\mathcal{M},g)$ be a  time-oriented, connected and globally hyperbolic Lorentzian manifold 
and let $P_\star$ and $\Delta_{\star\pm}$ be as above. Then there is a canonically associated symplectic $\mathbb{R}[[\lambda]]$-module
$(V_\star,\omega_\star)$, where $V_\star =C^\infty_0(\mathcal{M},\mathbb{R})[[\lambda]]/P_\star[C^\infty_0(\mathcal{M},\mathbb{R})[[\lambda]]]$ and $\omega_\star$ is given in (\ref{eqn:symplectic}).
\end{propo}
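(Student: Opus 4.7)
The plan is to verify the four defining properties of a symplectic $\mathbb{R}[[\lambda]]$-module in turn. Well-definedness of $V_\star$ follows because $P_\star$ is real by assumption and therefore preserves $C_0^\infty(\mathcal{M},\mathbb{R})[[\lambda]]$; likewise $\omega_\star$ takes values in $\mathbb{R}[[\lambda]]$ because the scalar product (\ref{eqn:scalarproduct}) of two real-valued functions is real. To see that $\omega_\star$ descends to $V_\star$, note that independence of the second representative is immediate from $\Delta_\star\circ P_\star=0$ on $C_0^\infty(\mathcal{M})[[\lambda]]$ (Theorem~\ref{theo:complex}), while independence of the first uses the formal self-adjointness (\ref{eqn:selfadjoint}): if $\varphi=P_\star(\chi)$, then $(P_\star(\chi),\Delta_\star(\psi))=(\chi,P_\star\Delta_\star(\psi))=0$ since $P_\star\circ\Delta_\star=P_\star\circ\Delta_{\star+}-P_\star\circ\Delta_{\star-}=0$ on $C_0^\infty(\mathcal{M})[[\lambda]]$ by (\ref{eqn:green1}). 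Bilinearity over $\mathbb{R}[[\lambda]]$ is then automatic from the $\mathbb{R}[[\lambda]]$-linearity of $\Delta_\star$ and of the scalar product.

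The crucial step is antisymmetry, which I reduce to the formal anti-self-adjointness of $\Delta_\star$ with respect to (\ref{eqn:scalarproduct}), equivalently $\Delta_{\star\pm}^{\mathrm{adj}}=\Delta_{\star\mp}$. I define $\widetilde\Delta$ via $(\widetilde\Delta(\varphi),\psi):=(\varphi,\Delta_{\star+}(\psi))$ for all $\varphi,\psi\in C_0^\infty(\mathcal{M})[[\lambda]]$ (well-posed because $\mathrm{supp}(\varphi)\cap\mathrm{supp}(\Delta_{\star+}(\psi))$ is compact) and verify the three characterizing properties of $\Delta_{\star-}$ in Theorem~\ref{theo:green}. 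The identities $P_\star\circ\widetilde\Delta=\mathrm{id}$ and $\widetilde\Delta\circ P_\star=\mathrm{id}$ on $C_0^\infty(\mathcal{M})[[\lambda]]$ follow from self-adjointness of $P_\star$ combined with (\ref{eqn:green1})--(\ref{eqn:green2}) for $\Delta_{\star+}$. The support condition $\mathrm{supp}(\widetilde\Delta(\varphi))\subseteq J_-(\mathrm{supp}(\varphi))$ follows by a duality argument: if $x\notin J_-(\mathrm{supp}(\varphi))$, pick $\psi$ supported in a neighborhood of $x$ small enough that $J_+(\mathrm{supp}(\psi))\cap\mathrm{supp}(\varphi)=\emptyset$; then $(\widetilde\Delta(\varphi),\psi)=(\varphi,\Delta_{\star+}(\psi))=0$, forcing $\widetilde\Delta(\varphi)$ to vanish near $x$. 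The uniqueness clause of Theorem~\ref{theo:green} then yields $\widetilde\Delta=\Delta_{\star-}$, so $\Delta_\star$ is anti-self-adjoint and $\omega_\star$ antisymmetric.

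For weak non-degeneracy, suppose $\omega_\star([\varphi],[\psi])=0$ for every $[\psi]\in V_\star$. Using antisymmetry and the reality of all ingredients, this reads $\int_\mathcal{M}\psi\,\Delta_\star(\varphi)\,\vol_g=0$ for every $\psi\in C_0^\infty(\mathcal{M},\mathbb{R})[[\lambda]]$; expanding in powers of $\lambda$ and applying the fundamental lemma of the calculus of variations order-by-order gives $\Delta_\star(\varphi)=0$. Exactness of the complex (\ref{eqn:complex}) at the second $C_0^\infty(\mathcal{M})[[\lambda]]$ then furnishes some $\chi\in C_0^\infty(\mathcal{M})[[\lambda]]$ with $\varphi=P_\star(\chi)$, and injectivity of $P_\star$ on $C_0^\infty(\mathcal{M})[[\lambda]]$ (exactness at the first position in (\ref{eqn:complex})) together with reality of $P_\star$ and of $\varphi$ force $\chi^\ast=\chi$, so $[\varphi]=0$ in $V_\star$. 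The main obstacle throughout is the adjoint identification $\Delta_{\star+}^{\mathrm{adj}}=\Delta_{\star-}$: it is the single point where one has to combine self-adjointness of $P_\star$, the support conditions on the Green's operators, and the uniqueness statement of Theorem~\ref{theo:green} simultaneously in the formal-power-series setting.
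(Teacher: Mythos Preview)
Your argument is correct and matches the approach the paper defers to (the paper itself only says ``Analogously to \cite{Ohl:2009qe} one shows that $\omega_\star$ is weakly non-degenerate and antisymmetric''): well-definedness on the quotient via $\Delta_\star\circ P_\star=0$ and formal self-adjointness of $P_\star$, antisymmetry via the adjoint relation $\Delta_{\star\pm}^{\dagger}=\Delta_{\star\mp}$, and weak non-degeneracy via exactness of the complex (\ref{eqn:complex}).

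One small technical remark on your antisymmetry step: introducing $\widetilde\Delta$ purely through the weak pairing $(\widetilde\Delta(\varphi),\psi):=(\varphi,\Delta_{\star+}(\psi))$ only defines a distribution a priori, whereas the uniqueness clause of Theorem~\ref{theo:green} applies to genuine maps $C_0^\infty(\mathcal{M})[[\lambda]]\to C^\infty(\mathcal{M})[[\lambda]]$. The cleanest way to close this gap is to use the explicit formula (\ref{eqn:explicitgreen}) directly: since each $P_{(j)}$ is formally self-adjoint and the classical Green's operators satisfy $\Delta_\pm^{\dagger}=\Delta_\mp$, taking adjoints term by term in (\ref{eqn:explicitgreen}) gives $\Delta_{(n)\pm}^{\dagger}=\Delta_{(n)\mp}$ as honest operators, and hence $\Delta_\star^{\dagger}=-\Delta_\star$. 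Your duality-plus-uniqueness argument then serves as a consistency check rather than a definition.
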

Note that since $V_\star$ is isomorphic to $\text{Sol}_{P_\star}^\mathbb{R}$ via (\ref{eqn:realiso}) this means
that the space of real solutions of the deformed wave equation can be equipped with a symplectic structure.

\section{\label{sec:canquant}Canonical quantization}
Guided by the algebraic approach to commutative QFTs, see e.g.~\cite{baer}, we quantize the symplectic
$\mathbb{R}[[\lambda]]$-module $(V_\star,\omega_\star)$ using suitable $\ast$-algebras of field observables.
Since our present focus is on
formal deformation quantization, the preferred choice of an algebra of observables is the $\ast$-algebra
of field polynomials and not the Weyl algebra. 
This is due to the fact that formal power series prohibit the use of $C^\ast$-algebras.
For a review on algebras, states and representations in deformation quantization see \cite{waldmann}.
We make the following
\begin{defi}
\label{def:fieldpoly}
 Let $(W,\rho)$ be a symplectic $\mathbb{R}[[\lambda]]$-module.
 A unital $\ast$-algebra over $\mathbb{C}[[\lambda]]$ is called $\ast$-algebra of 
 field polynomials $A_{(W,\rho)}$, if it is generated by the
 elements $\Phi( \varphi )$, $\varphi \in W$, subject to the relations
\begin{subequations}
\label{eqn:fieldalgebra}
\begin{flalign}
~~\label{eqn:fieldoplin}\Phi(\beta\,\varphi+\gamma\,\psi)&=\beta\,\Phi(\varphi)+\gamma\,\Phi(\psi)~,\\
~~\Phi(\varphi)^\ast&=\Phi(\varphi)~,\\
~~[\Phi(\varphi),\Phi(\psi)]&= i\,\rho(\varphi,\psi)\, 1~,
\end{flalign}
\end{subequations}
for all $\varphi,\psi\in W$ and $\beta,\gamma\in \mathbb{R}[[\lambda]]$.
\end{defi}
\noindent A state on a unital $\ast$-algebra $A$ over $\mathbb{C}[[\lambda]]$ 
is a $\mathbb{C}[[\lambda]]$-linear map $\Omega:A\to\mathbb{C}[[\lambda]]$ satisfying
\begin{subequations}
\begin{flalign}
 \Omega(1)&=1~,\\
 \Omega(a^\ast\,a)&\geq 0~,~\forall a\in A~.
\end{flalign}
\end{subequations}
The ordering on $\mathbb{R}[[\lambda]]$ is defined by
\begin{flalign}
 \mathbb{R}[[\lambda]]\ni\gamma= \sum\limits_{n=n_0}^{\infty}\lambda^n \gamma_{(n)}>0\quad :\Longleftrightarrow\quad \gamma_{(n_0)}>0~.
\end{flalign}

The observables for the deformed QFT are given by the algebra $A_{(V_\star,\omega_\star)}$. Due to the relations
(\ref{eqn:fieldalgebra}) it is natural to interpret $\Phi(\varphi)$ as smeared field operators 
``~$\Phi(\varphi)= \int_\mathcal{M}\Phi\,\varphi\,\vol_g$~''.
Fixing a state $\Omega_\star$ on $A_{(V_\star,\omega_\star)}$ we can use the formal GNS construction outlined in
\cite{waldmann} to represent $A_{(V_\star,\omega_\star)}$ on a pre-Hilbert space over $\mathbb{C}[[\lambda]]$.
The choice of state $\Omega_\star$ for the deformed QFT is, similar to the commutative case, in general highly nonunique.
However, as we will show in Section \ref{sec:consequences}, there is a way to induce states on the algebra $A_{(V_\star,\omega_\star)}$
by pulling-back states $\Omega$ of the commutative QFT.

\section{\label{sec:symplecto}Symplectic isomorphisms}
In this section we provide a theorem showing that the deformed symplectic $\mathbb{R}[[\lambda]]$-module 
$(V_\star,\omega_\star)$ is isomorphic,
via a symplectic isomorphism, to the formal power series extension of the undeformed symplectic vector space $(V[[\lambda]],\omega)$,
where $V:=C^\infty_0(\mathcal{M},\mathbb{R})/P[C^\infty_0(\mathcal{M},\mathbb{R})]$ and
\begin{flalign}
 \omega:V[[\lambda]]\times V[[\lambda]] \to\mathbb{R}[[\lambda]]\,,~([\varphi],[\psi])\mapsto \omega([\varphi],[\psi]) 
= \bigl(\varphi,\Delta(\psi)\bigr)~.
\end{flalign} 
Here $P=P_\star\vert_{\lambda=0}$ 
is the undeformed wave operator and  $\Delta=\Delta_\star\vert_{\lambda=0}$ is the corresponding undeformed 
retarded-advanced Green's operator. Analogously to (\ref{eqn:realiso}) we have the isomorphism
\begin{flalign}
 \label{eqn:realisoclass}
\mathcal{I}:V[[\lambda]] \to \text{Sol}_P^\mathbb{R}\,,~[\varphi]\mapsto \Delta(\varphi)~.
\end{flalign}

\begin{lem}
\label{lem:solutionspaceiso}
 The maps $T_\pm:\text{Sol}_{P_\star}^\mathbb{R}\to \text{Sol}_{P}^\mathbb{R}$ defined by 
\begin{flalign}
\label{eqn:soliso}
T_\pm:=\text{id}_{C^\infty_\text{sc}(\mathcal{M})}+t_{\pm}=
\text{id}_{C^\infty_\text{sc}(\mathcal{M})}+\Delta_{\pm}\circ\sum\limits_{n=1}^\infty\lambda^n\,P_{(n)}
\end{flalign}
are $\mathbb{R}[[\lambda]]$-module isomorphisms.
\end{lem}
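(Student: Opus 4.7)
The plan is to verify in turn that $T_\pm$ is well-defined with target in $\text{Sol}_P^{\mathbb{R}}$, is $\mathbb{R}[[\lambda]]$-linear, and is bijective with inverse obtained by a $\lambda$-adic geometric series argument.

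First I would check well-definedness of $T_\pm$ on $C_{\mathrm{sc}}^\infty(\mathcal{M})[[\lambda]]$. For $\Phi\in C_{\mathrm{sc}}^\infty(\mathcal{M})[[\lambda]]$ each $P_{(n)}(\Phi)$, $n\geq 1$, lies in $C_0^\infty(\mathcal{M})[[\lambda]]$ by the support hypothesis in Theorem \ref{theo:green}, so $\Delta_\pm\circ P_{(n)}(\Phi)$ is defined and has support contained in $J_\pm$ of a compact set, which is spatially compact. Thus $t_\pm(\Phi)\in C_{\mathrm{sc}}^\infty(\mathcal{M})[[\lambda]]$. Reality of $\Phi$ is preserved because $P_{(n)}$ and $\Delta_\pm$ are real. $\mathbb{R}[[\lambda]]$-linearity is immediate from the definition.

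Next I would show $T_\pm$ maps $\text{Sol}_{P_\star}^{\mathbb{R}}$ into $\text{Sol}_P^{\mathbb{R}}$. Using $P\circ\Delta_\pm=\mathrm{id}$ on $C_0^\infty(\mathcal{M})$ (applied coefficient-wise in $\lambda$) I compute
\begin{flalign*}
P\bigl(T_\pm(\Phi)\bigr) = P\Phi + P\circ\Delta_\pm\Bigl(\sum_{n=1}^\infty\lambda^n P_{(n)}(\Phi)\Bigr) = P\Phi + \sum_{n=1}^\infty\lambda^n P_{(n)}(\Phi) = P_\star(\Phi) = 0 .
\end{flalign*}

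For bijectivity, I would first invert $T_\pm$ as a map on $C_{\mathrm{sc}}^\infty(\mathcal{M})[[\lambda]]$. Since $t_\pm = \mathcal{O}(\lambda)$, the Neumann-type series $T_\pm^{-1}:=\sum_{k=0}^\infty (-t_\pm)^k$ converges in the $\lambda$-adic topology on $\mathbb{R}[[\lambda]]$-modules, and a direct computation gives $T_\pm\circ T_\pm^{-1}=T_\pm^{-1}\circ T_\pm=\mathrm{id}$. It then remains to check that the inverse sends $\text{Sol}_P^{\mathbb{R}}$ into $\text{Sol}_{P_\star}^{\mathbb{R}}$: given $\Psi\in\text{Sol}_P^{\mathbb{R}}$ and $\Phi:=T_\pm^{-1}(\Psi)$, the identity $\Psi=\Phi+\Delta_\pm\bigl(\sum_{n\geq 1}\lambda^n P_{(n)}(\Phi)\bigr)$ together with $P\Psi=0$ and $P\circ\Delta_\pm=\mathrm{id}$ yields
\begin{flalign*}
0 = P\Psi = P\Phi + \sum_{n=1}^\infty \lambda^n P_{(n)}(\Phi) = P_\star(\Phi) ,
\end{flalign*}
so $\Phi\in\text{Sol}_{P_\star}^{\mathbb{R}}$.

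The only delicate point is ensuring that all compositions inside $t_\pm$ and its powers remain well-defined on spatially compact functions; this is exactly what the compact-support assumption on the $P_{(n)}$ from Theorem \ref{theo:green} buys. Everything else is either formal $\lambda$-adic algebra or a direct application of the identities $P\circ\Delta_\pm=\mathrm{id}$ on $C_0^\infty(\mathcal{M})$, so I do not anticipate any deeper obstruction beyond bookkeeping.
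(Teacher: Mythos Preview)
Your proposal is correct and follows essentially the same approach as the paper: verify well-definedness via the support condition on $P_{(n)}$, compute $P\circ T_\pm = P_\star$ to show solutions are mapped to solutions, and invert $T_\pm$ by a $\lambda$-adic geometric series. The only minor difference is in the last step: the paper identifies the geometric series in closed form as $T_\pm^{-1}=\mathrm{id}-\Delta_{\star\pm}\circ\sum_{n\geq 1}\lambda^n P_{(n)}$ via the explicit formula for $\Delta_{(n)\pm}$, and then checks $P_\star\circ T_\pm^{-1}=P$ using $P_\star\circ\Delta_{\star\pm}=\mathrm{id}$, whereas you simply reuse the already established identity $P\circ T_\pm=P_\star$ applied to $\Phi=T_\pm^{-1}(\Psi)$, which is slightly more economical but does not yield the explicit inverse.
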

\begin{proof}
The compositions $\Delta_\pm\circ P_{(n)}$ are well-defined due to the support condition on
$P_{(n)}$, $n>0$. Additionally, the maps $\Delta_\pm\circ P_{(n)}$ are real for all $n>0$.
 The composition of $P$ and $T_\pm$ is given by
\begin{flalign}
 P\circ T_\pm = P + P\circ \Delta_\pm\circ\sum\limits_{n=1}^\infty \lambda^n\, P_{(n)} =
P + \sum\limits_{n=1}^\infty \lambda^n\,P_{(n)} =\sum\limits_{n=0}^\infty \lambda^n\,P_{(n)}=P_\star~,
\end{flalign}
where we have used that $P\circ\Delta_\pm = \text{id}_{C^\infty_0(\mathcal{M})}$.
Thus, for all $\Phi\in\text{Sol}_{P_\star}^\mathbb{R}$ we have $T_\pm(\Phi)\in\text{Sol}_P^\mathbb{R}$.\newline
The inverse of $T_\pm$ is constructed by the geometric series
\begin{flalign}
 T_\pm^{-1}=\left(\text{id}_{C^\infty_\text{sc}(\mathcal{M})} +t_\pm\right)^{-1}
=\sum\limits_{m=0}^\infty \left(-t_\pm\right)^m \stackrel{\text{(\ref{eqn:explicitgreen})}}{=} 
\text{id}_{C^\infty_\text{sc}(\mathcal{M})} -\Delta_{\star\pm}\circ\sum\limits_{n=1}^\infty\lambda^n\,P_{(n)}
\end{flalign}
and satisfies
\begin{flalign}
 P_\star\circ T_\pm^{-1} &= P_\star - P_\star\circ\Delta_{\star\pm}\circ \sum\limits_{n=1}^\infty\lambda^n\,P_{(n)}
= P_\star - \sum\limits_{n=1}^\infty\lambda^n~P_{(n)} = P~,
\end{flalign}
where we have used that $P_\star\circ\Delta_{\star\pm}=\text{id}_{C^\infty_0(\mathcal{M})}$.
Thus, for all $\Phi\in\text{Sol}_{P}^\mathbb{R}$ we have $T^{-1}_\pm(\Phi)\in\text{Sol}_{P_\star}^\mathbb{R}$.
\end{proof}
It turns out that in general $T_+$ and $T_-$ differ. To see this let $\Phi\in\text{Sol}_{P_\star}^\mathbb{R}$ be arbitrary.
Due to Theorem \ref{theo:complex} there is a $\varphi\in C^\infty_0(\mathcal{M},\mathbb{R})[[\lambda]]$, 
such that $\Phi=\Delta_\star(\varphi)$.
We obtain 
\begin{flalign}
T_+(\Phi) - T_-(\Phi) = \Delta\left(\sum\limits_{n=1}^\infty\lambda^n\,P_{(n)}\left(\Delta_\star(\varphi)\right)\right)
=-\Delta\left(P\left(\Delta_\star(\varphi)\right)\right)~.
\end{flalign}
The difference between $T_+$ and $T_-$ is thus given by the operator $\Delta\circ P\circ \Delta_\star$. 
Notice that this operator is {\it not} zero in general, since the relation $\Delta\circ P =0$ just holds when acting on functions
of compact support, while $\Delta_\star$ maps to functions of noncompact support. 
To be more explicit we expand the operator $\Delta\circ P\circ\Delta_\star$ to first order in $\lambda$
 by using Theorem \ref{theo:green} and find
\begin{flalign}
\nonumber \Delta\circ P\circ \Delta_\star &= \Delta\circ P\circ \bigl(\Delta - \lambda(\Delta_+\circ P_{(1)}\circ\Delta_+ - \Delta_-\circ P_{(1)}\circ \Delta_-) \bigr)+\mathcal{O}(\lambda^2) \\
&= -\lambda \,\Delta\circ P_{(1)}\circ \Delta +\mathcal{O}(\lambda^2)~.
\end{flalign} 
Since $P_{(1)}$ comes from the choice of deformation, while $\Delta$ describes the commutative dynamics, 
these operators are independent and $\Delta\circ P\circ \Delta_\star$ in general does not vanish.
\begin{remark}
The maps $T_\pm$ can be interpreted as retarded/advanced isomorphisms, since they 
 depend on the retarded/advanced Green's operators.
Due to the support property of $\Delta_\pm$ and the support condition
on $P_{(n)}$, $n>0$, we obtain order by order in $\lambda$ that $T_\pm(\Phi)$ is equal to $\Phi$ for sufficiently small/large times,
i.e.~for $t\to \mp\infty$.
\end{remark}

Employing the isomorphisms $T_\pm$, the isomorphism $\mathcal{I}_\star$ (\ref{eqn:realiso}) 
and its commutative counterpart $\mathcal{I}$ (\ref{eqn:realisoclass}), we obtain the $\mathbb{R}[[\lambda]]$-module isomorphisms
$\mathbf{T}_\pm :=\mathcal{I}^{-1}\circ T_\pm\circ\mathcal{I}_\star: V_\star\to V[[\lambda]]$. We can
map the deformed symplectic $\mathbb{R}[[\lambda]]$-module $(V_\star,\omega_\star)$, via a symplectic isomorphism, 
to the symplectic $\mathbb{R}[[\lambda]]$-module $(V[[\lambda]],\widehat{\omega}_\star)$, where by definition
\begin{flalign}
\widehat{\omega}_\star\bigl([\varphi],[\psi]\bigr) := \omega_\star\bigl(\mathbf{T}^{-1}_\pm\bigl([\varphi]\bigr),\mathbf{T}^{-1}_\pm\bigl([\psi]\bigr) \bigr)~,
\end{flalign}
for all $[\varphi],[\psi]\in V[[\lambda]]$. This expression can be simplified and we obtain 
\begin{flalign}
\nonumber \widehat{\omega}_\star\bigl([\varphi],[\psi]\bigr) 
&= \Bigl(\mathbf{T}^{-1}_\pm([\varphi]),\Delta_\star\bigl(\mathbf{T}^{-1}_\pm([\psi])\bigr)\Bigr)
 = \Bigl(\mathbf{T}^{-1}_\pm([\varphi]),T^{-1}_\pm\bigl(\Delta(\psi)\bigr)\Bigr)\\
&=\Bigl(T_\pm^{-1\dagger}\bigl(\mathbf{T}^{-1}_\pm([\varphi])\bigr),\Delta(\psi)\Bigr)
=-\Bigl(\Delta\bigl(T_\pm^{-1\dagger}\bigl(\mathbf{T}^{-1}_\pm([\varphi])\bigr)\bigr),\psi\Bigr)~,
\end{flalign}
where we have used the adjoint map
\begin{flalign}
T_\pm^{-1\dagger} = \text{id}_{C^\infty_0(\mathcal{M})} - \sum\limits_{n=1}^\infty\lambda^n\,P_{(n)}\circ \Delta_{\star\mp}
\end{flalign}
and that $\Delta$ is antihermitian. Defining the map $\widehat{\Delta}_\star:C^\infty_0(\mathcal{M},\mathbb{R})[[\lambda]]\to 
\text{Sol}_{P}^\mathbb{R}$ by
\begin{flalign}
\widehat{\Delta}_\star := \Delta\circ T_\pm^{-1\dagger} \circ \mathcal{I}_\star^{-1}\circ T^{-1}_\pm\circ \Delta
\end{flalign}
we have for all $[\varphi],[\psi]\in V[[\lambda]]$
\begin{flalign}
\widehat{\omega}_\star\bigl([\varphi],[\psi]\bigr) = \bigl(\varphi,\widehat{\Delta}_\star(\psi)\bigr)~.
\end{flalign}
Analogously to (\ref{eqn:realiso}) and (\ref{eqn:realisoclass}) we define the isomorphism 
\begin{flalign}
 \widehat{\mathcal{I}}_\star: V[[\lambda]] \to \text{Sol}_{P}^\mathbb{R}~,\quad [\varphi]\mapsto\widehat{\Delta}_\star(\varphi)~.
\end{flalign}
We obtain the following
\begin{theorem}
 \label{theo:symplecto}
The map $\mathbf{S}: V[[\lambda]]\to V[[\lambda]]$ defined by $\mathbf{S} = \sum\limits_{n=0}^\infty\lambda^n S_{(n)}$,
where
\begin{subequations}
\label{eqn:}
 \begin{flalign}
S_{(0)} &= \text{id}_V~,\\
S_{(1)} &= \frac{1}{2} \mathcal{I}^{-1}\circ\widehat{\mathcal{I}}_{(1)}~,\\
 S_{(n)} &= \frac{1}{2}\Bigl(\mathcal{I}^{-1}\circ \widehat{\mathcal{I}}_{(n)} - \sum\limits_{m=1}^{n-1}S_{(m)}\circ S_{(n-m)}\Bigr) ~,~\forall n\geq 2~,
 \end{flalign}
\end{subequations}
provides a symplectic isomorphism between $(V[[\lambda]],\widehat{\omega}_\star)$ and $(V[[\lambda]],\omega)$,
i.e.~for all $[\varphi],[\psi]\in V[[\lambda]]$ we have $\widehat{\omega}_\star([\varphi],[\psi])=\omega(\mathbf{S}[\varphi],\mathbf{S}[\psi])$.
\end{theorem}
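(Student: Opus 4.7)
The plan is to recognize $\mathbf{S}$ as the unique formal power series square root of the endomorphism $R := \mathcal{I}^{-1}\circ\widehat{\mathcal{I}}_\star$ of $V[[\lambda]]$, and to establish the symplectic property by first showing that $R$, and hence $\mathbf{S}$, is self-adjoint with respect to $\omega$. First I would rewrite $\widehat{\omega}_\star$ in terms of $\omega$ alone: since $\widehat{\Delta}_\star(\psi) = \widehat{\mathcal{I}}_\star([\psi]) = \mathcal{I}(R[\psi])$, one obtains $\widehat{\omega}_\star([\varphi],[\psi]) = (\varphi,\widehat{\Delta}_\star(\psi)) = (\varphi,\mathcal{I}(R[\psi])) = \omega([\varphi],R[\psi])$ for all $[\varphi],[\psi]\in V[[\lambda]]$. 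Setting $\lambda=0$ collapses $T_\pm$, $T_\pm^{-1\dagger}$, $\mathcal{I}_\star^{-1}$ and $\Delta_\star$ to $\text{id}$, $\text{id}$, $\mathcal{I}^{-1}$ and $\Delta$ respectively, so $R_{(0)}=\text{id}_V$ and $R$ is invertible as a formal power series.

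Next I would deduce $\omega$-self-adjointness of $R$ from antisymmetry. Since $\widehat{\omega}_\star$ is the pullback of $\omega_\star$ along the $\mathbb{R}[[\lambda]]$-linear isomorphism $\mathbf{T}_\pm$, it is antisymmetric; combining this with antisymmetry of $\omega$ in the identity above gives $\omega(R[\varphi],[\psi]) = -\omega([\psi],R[\varphi]) = -\widehat{\omega}_\star([\psi],[\varphi]) = \widehat{\omega}_\star([\varphi],[\psi]) = \omega([\varphi],R[\psi])$. Reading the recursion in the theorem as the coefficient-wise form of $\mathbf{S}\circ\mathbf{S}=R$ — the $\lambda^n$-coefficient is $2S_{(n)} + \sum_{m=1}^{n-1}S_{(m)}\circ S_{(n-m)} = R_{(n)}$ for $n\geq 1$ — then identifies $\mathbf{S}$ as the unique square root of $R$ normalised by $S_{(0)}=\text{id}_V$.

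The core technical step will be an induction on $n$ showing that each $S_{(n)}$ is $\omega$-self-adjoint. The base case is trivial. The inductive step rests on the elementary fact that for any two $\omega$-self-adjoint endomorphisms $A,B$ of $V$ the anticommutator is $\omega$-self-adjoint, $\omega((A\circ B + B\circ A)\varphi,\psi) = \omega(B\varphi,A\psi) + \omega(A\varphi,B\psi) = \omega(\varphi,(B\circ A + A\circ B)\psi)$, and likewise any square $A\circ A$. Because $\sum_{m=1}^{n-1}S_{(m)}\circ S_{(n-m)}$ is manifestly invariant under $m\leftrightarrow n-m$, it decomposes into such anticommutators (with an extra square term when $n$ is even and $m=n/2$); by the induction hypothesis the individual $S_{(m)}$ with $m<n$ are $\omega$-self-adjoint, so the whole sum is. Combined with the $\omega$-self-adjointness of $R_{(n)}$, the recursion then forces $S_{(n)}$ to be $\omega$-self-adjoint as well.

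The conclusion is then immediate: using $\omega$-self-adjointness of $\mathbf{S}$ together with $\mathbf{S}\circ\mathbf{S}=R$, one has $\omega(\mathbf{S}[\varphi],\mathbf{S}[\psi]) = \omega([\varphi],\mathbf{S}\circ\mathbf{S}[\psi]) = \omega([\varphi],R[\psi]) = \widehat{\omega}_\star([\varphi],[\psi])$. Invertibility of $\mathbf{S}$ is automatic since $S_{(0)}=\text{id}_V$, with a two-sided $\mathbb{R}[[\lambda]]$-linear inverse furnished by the geometric series in powers of $\mathbf{S}-\text{id}_{V[[\lambda]]}$. The main obstacle is the induction: individual products $S_{(m)}\circ S_{(n-m)}$ generically fail to be $\omega$-self-adjoint, and the argument only succeeds by exploiting the $m\leftrightarrow n-m$ symmetry of the entire recursive sum.
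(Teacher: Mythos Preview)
Your proof is correct and follows essentially the same strategy as the paper: both establish that each $S_{(n)}$ is $\omega$-self-adjoint by induction, then use the recursion $\sum_{m=0}^{l}S_{(m)}\circ S_{(l-m)}=\mathcal{I}^{-1}\circ\widehat{\mathcal{I}}_{(l)}$ to conclude. Your packaging of the argument as ``$\mathbf{S}$ is the formal square root of the $\omega$-self-adjoint operator $R=\mathcal{I}^{-1}\circ\widehat{\mathcal{I}}_\star$'' makes the structure more transparent, and your inductive step via the $m\leftrightarrow n-m$ symmetry and the anticommutator observation is spelled out more fully than the paper's terse ``by induction it follows,'' but the underlying mechanism is identical.
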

\begin{proof}
 The map $\mathbf{S}$ is invertible, since it is a formal deformation of the identity map.
Let $[\varphi],[\psi]\in V[[\lambda]]$ be arbitrary. We obtain
\begin{flalign}
 \nonumber \omega(\mathbf{S}[\varphi],\mathbf{S}[\psi]) &= \sum\limits_{n=0}^\infty\lambda^n\sum\limits_{m+k+i+j=n}
\omega(S_{(m)}[\varphi]_{(k)},S_{(i)}[\psi]_{(j)})\\
\label{eqn:zwischenschritt}&= \sum\limits_{n=0}^\infty\lambda^n\sum\limits_{k+j+l=n}~\sum\limits_{m+i=l}
\omega(S_{(m)}[\varphi]_{(k)},S_{(i)}[\psi]_{(j)})~.
\end{flalign}
Note that for all $[\varphi],[\psi]\in V$ we have $\omega([\varphi],S_{(0)}[\psi])=\omega(S_{(0)}[\varphi],[\psi])$ (trivially) and 
\begin{flalign}
 \omega([\varphi],S_{(1)}[\psi])=\frac{1}{2}\widehat{\omega}_{(1)}([\varphi],[\psi])=-\frac{1}{2}\widehat{\omega}_{(1)}([\psi],[\varphi])
= \omega(S_{(1)}[\varphi],[\psi])~.
\end{flalign}
By induction it follows that $\omega([\varphi],S_{(n)}[\psi])=\omega(S_{(n)}[\varphi],[\psi])$ for all $n\geq 0$.

Using this, the inner sum of (\ref{eqn:zwischenschritt}) reads
\begin{flalign}
 \sum\limits_{m+i=l}
\omega(S_{(m)}[\varphi]_{(k)},S_{(i)}[\psi]_{(j)}) 
= \omega([\varphi]_{(k)},\sum\limits_{m=0}^l S_{(m)}\circ S_{(l-m)}[\psi]_{(j)})~.
\end{flalign}
It remains to simplify the map $\sum_{m=0}^l S_{(m)}\circ S_{(l-m)}$. For $l=0$ this is simply the identity map
and for $l=1$ it reads $2\, S_{(1)}=\mathcal{I}^{-1}\circ \widehat{\mathcal{I}}_{(1)}$. For $l\geq 2$ we find
\begin{flalign}
 \sum\limits_{m=0}^l S_{(m)}\circ S_{(l-m)} = 2\,S_{(l)} + \sum\limits_{m=1}^{l-1}S_{(m)}\circ S_{(l-m)}
 = \mathcal{I}^{-1}\circ\widehat{\mathcal{I}}_{(l)}~.
\end{flalign}
Thus, (\ref{eqn:zwischenschritt}) reads
\begin{flalign}
 \omega(\mathbf{S}[\varphi],\mathbf{S}[\psi]) =  
\sum\limits_{n=0}^\infty\lambda^n\sum\limits_{k+j+l=n}\omega([\varphi]_{(k)},\mathcal{I}^{-1}\circ \widehat{\mathcal{I}}_{(l)}[\psi]_{(j)})
= \widehat{\omega}_{\star}([\varphi],[\psi])~.
\end{flalign}

\end{proof}
As a direct consequence we obtain
\begin{cor}
\label{cor:symplecto}
 The maps $\mathbf{S}_\pm:=\mathbf{S}\circ \mathbf{T}_\pm:V_\star\to V[[\lambda]]$ are symplectic isomorphisms between
 $(V_\star,\omega_\star)$ and $(V[[\lambda]],\omega)$, i.e.~$\omega\bigl(\mathbf{S}_\pm([\varphi]),\mathbf{S}_\pm([\psi])\bigr)
=\omega_\star\bigl([\varphi],[\psi]\bigr)$ for all $[\varphi],[\psi]\in V_\star$.
\end{cor}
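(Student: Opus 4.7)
The plan is to observe that this corollary follows essentially for free from the machinery already assembled: it is just the composition of two symplectic isomorphisms. First I would check that $\mathbf{S}_\pm$ is an $\mathbb{R}[[\lambda]]$-module isomorphism. The map $\mathbf{T}_\pm = \mathcal{I}^{-1}\circ T_\pm\circ \mathcal{I}_\star$ is the composition of the $\mathbb{R}[[\lambda]]$-module isomorphisms $\mathcal{I}_\star$ (from Section \ref{sec:symplvec}), $T_\pm$ (Lemma \ref{lem:solutionspaceiso}), and $\mathcal{I}^{-1}$ (from Section \ref{sec:symplecto}), and $\mathbf{S}$ is invertible by Theorem \ref{theo:symplecto} since it is a formal deformation of $\mathrm{id}_V$. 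Hence $\mathbf{S}_\pm=\mathbf{S}\circ \mathbf{T}_\pm$ is an $\mathbb{R}[[\lambda]]$-module isomorphism.

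Next I would track the symplectic form through the two stages. The defining relation of $\widehat{\omega}_\star$, namely $\widehat{\omega}_\star([\varphi],[\psi]) = \omega_\star(\mathbf{T}_\pm^{-1}([\varphi]),\mathbf{T}_\pm^{-1}([\psi]))$, when applied to arguments of the form $\mathbf{T}_\pm([\varphi'])$ and $\mathbf{T}_\pm([\psi'])$, yields
\begin{flalign*}
\widehat{\omega}_\star\bigl(\mathbf{T}_\pm([\varphi']),\mathbf{T}_\pm([\psi'])\bigr) = \omega_\star\bigl([\varphi'],[\psi']\bigr)~,
\end{flalign*}
so $\mathbf{T}_\pm:(V_\star,\omega_\star)\to(V[[\lambda]],\widehat{\omega}_\star)$ is symplectic by construction. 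Theorem \ref{theo:symplecto} then supplies $\omega(\mathbf{S}[\varphi],\mathbf{S}[\psi]) = \widehat{\omega}_\star([\varphi],[\psi])$ for all $[\varphi],[\psi]\in V[[\lambda]]$.

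Combining these, for arbitrary $[\varphi],[\psi]\in V_\star$ I would compute
\begin{flalign*}
\omega\bigl(\mathbf{S}_\pm([\varphi]),\mathbf{S}_\pm([\psi])\bigr)
&= \omega\bigl(\mathbf{S}(\mathbf{T}_\pm([\varphi])),\mathbf{S}(\mathbf{T}_\pm([\psi]))\bigr)\\
&= \widehat{\omega}_\star\bigl(\mathbf{T}_\pm([\varphi]),\mathbf{T}_\pm([\psi])\bigr)
= \omega_\star\bigl([\varphi],[\psi]\bigr)~,
\end{flalign*}
which is the desired identity. There is no serious obstacle here; the substantive content is Theorem \ref{theo:symplecto} and the construction of $\widehat{\omega}_\star$, and the corollary amounts to recognizing that composition of two symplectic isomorphisms is again a symplectic isomorphism. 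The only minor point worth remarking is that the sign choice must be used consistently throughout (the same $\pm$ appearing in $\mathbf{S}_\pm$, in $\mathbf{T}_\pm$, and in the defining formula for $\widehat{\omega}_\star$), which the calculation above respects automatically.
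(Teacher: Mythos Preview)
Your proposal is correct and matches the paper's intended argument exactly: the paper states the corollary as ``a direct consequence'' of Theorem~\ref{theo:symplecto} without writing out a proof, and what you have done is precisely to spell out that the composition of the symplectic isomorphism $\mathbf{T}_\pm:(V_\star,\omega_\star)\to(V[[\lambda]],\widehat{\omega}_\star)$ (by definition of $\widehat{\omega}_\star$) with the symplectic isomorphism $\mathbf{S}:(V[[\lambda]],\widehat{\omega}_\star)\to(V[[\lambda]],\omega)$ (Theorem~\ref{theo:symplecto}) is again a symplectic isomorphism.
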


\section{\label{sec:consequences}Consequences of the symplectic isomorphisms}
In this section we study consequences of the symplectic isomorphisms $\mathbf{S}_\pm$ of Corollary \ref{cor:symplecto}.
\subsection{$\ast$-algebra of field polynomials}
The symplectic isomorphisms $\mathbf{S}_\pm:V_\star\to V[[\lambda]]$ canonically induce $\ast$-algebra isomorphisms
$\mathfrak{S}_\pm:A_{(V_\star,\omega_\star)}\to A_{(V[[\lambda]],\omega)}$ between the $\ast$-algebra of field polynomials
of the deformed and the formal power series extension of the undeformed QFT. The maps $\mathfrak{S}_\pm$
are defined on the generators $1,\Phi_\star([\varphi])\in A_{(V_\star,\omega_\star)}$ by
\begin{subequations}
\begin{flalign}
 \mathfrak{S}_\pm\bigl(1\bigr) &= 1~,\\
 \mathfrak{S}_\pm\bigl(\Phi_\star([\varphi])\bigr)&=\Phi\bigl(\mathbf{S}_\pm([\varphi])\bigr) ~,
\end{flalign}
\end{subequations}
and extended to $ A_{(V_\star,\omega_\star)}$ as $\ast$-algebra homomorphisms. Here $\Phi([\psi])$, $[\psi]\in V[[\lambda]]$, 
are the generators of the algebra $A_{(V[[\lambda]],\omega)}$.
In short, we obtain
\begin{propo}
\label{propo:algebraiso}
 There exist $\ast$-algebra isomorphisms $\mathfrak{S}_\pm:A_{(V_\star,\omega_\star)}\to A_{(V[[\lambda]],\omega)}$.
\end{propo}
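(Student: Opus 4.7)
The strategy is to invoke the universal property of the field polynomial $\ast$-algebra $A_{(W,\rho)}$ of Definition \ref{def:fieldpoly}: any assignment $\varphi \mapsto a(\varphi)$ from $W$ into a unital $\ast$-algebra over $\mathbb{C}[[\lambda]]$ that is $\mathbb{R}[[\lambda]]$-linear, lands in self-adjoint elements, and obeys the canonical commutation relations with respect to $\rho$, extends uniquely to a unital $\ast$-algebra homomorphism out of $A_{(W,\rho)}$. The plan is therefore to define $\mathfrak{S}_\pm$ on the generators $1,\Phi_\star([\varphi]) \in A_{(V_\star,\omega_\star)}$ by $1\mapsto 1$ and $\Phi_\star([\varphi])\mapsto \Phi(\mathbf{S}_\pm([\varphi]))$, then verify that the three defining relations of Definition \ref{def:fieldpoly} are preserved, so that $\mathfrak{S}_\pm$ extends unambiguously to a unital $\ast$-algebra homomorphism on all of $A_{(V_\star,\omega_\star)}$.

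The verification of the three relations uses exactly the three features of $\mathbf{S}_\pm$ established so far. First, $\mathbb{R}[[\lambda]]$-linearity of $\Phi_\star$ is transported by the $\mathbb{R}[[\lambda]]$-linearity of $\mathbf{S}_\pm$ (which is part of it being an $\mathbb{R}[[\lambda]]$-module isomorphism, Lemma \ref{lem:solutionspaceiso} together with Corollary \ref{cor:symplecto}) combined with relation (\ref{eqn:fieldoplin}) in $A_{(V[[\lambda]],\omega)}$. Second, the hermiticity $\Phi(\mathbf{S}_\pm([\varphi]))^\ast = \Phi(\mathbf{S}_\pm([\varphi]))$ is immediate from the defining relations of $A_{(V[[\lambda]],\omega)}$. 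Third, and this is the one non-trivial point, the commutator relation
\begin{flalign}
\com{\Phi(\mathbf{S}_\pm([\varphi]))}{\Phi(\mathbf{S}_\pm([\psi]))} = i\,\omega\bigl(\mathbf{S}_\pm([\varphi]),\mathbf{S}_\pm([\psi])\bigr)\,1 = i\,\omega_\star([\varphi],[\psi])\,1
\end{flalign}
holds precisely because $\mathbf{S}_\pm$ is \emph{symplectic}, as proved in Corollary \ref{cor:symplecto}.

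For invertibility, I would repeat the same construction with the symplectic isomorphism $\mathbf{S}_\pm^{-1}:V[[\lambda]]\to V_\star$ (which is again symplectic and $\mathbb{R}[[\lambda]]$-linear), producing a $\ast$-algebra homomorphism $\mathfrak{S}_\pm^{-1}:A_{(V[[\lambda]],\omega)}\to A_{(V_\star,\omega_\star)}$ determined by $\Phi([\chi]) \mapsto \Phi_\star(\mathbf{S}_\pm^{-1}([\chi]))$. The compositions $\mathfrak{S}_\pm\circ \mathfrak{S}_\pm^{-1}$ and $\mathfrak{S}_\pm^{-1}\circ \mathfrak{S}_\pm$ are unital $\ast$-algebra endomorphisms that act as the identity on the respective generating sets, so uniqueness in the universal property forces them to be the identity on the whole algebra.

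There is no serious obstacle: the proposition is a formal consequence of Corollary \ref{cor:symplecto} together with the universal description of $A_{(W,\rho)}$. The only point worth being explicit about is this universal property itself, which one phrases by presenting $A_{(W,\rho)}$ as the quotient of the free unital $\ast$-algebra over $\mathbb{C}[[\lambda]]$ on the symbols $\{\Phi(\varphi):\varphi\in W\}$ by the two-sided $\ast$-ideal generated by relations (\ref{eqn:fieldalgebra}); then well-definedness of $\mathfrak{S}_\pm$ amounts exactly to the three verifications above.
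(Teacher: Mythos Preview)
Your proof is correct and follows exactly the approach the paper takes: define $\mathfrak{S}_\pm$ on generators by $1\mapsto 1$, $\Phi_\star([\varphi])\mapsto \Phi(\mathbf{S}_\pm([\varphi]))$, and extend as a $\ast$-algebra homomorphism. The paper simply asserts that the symplectic isomorphisms $\mathbf{S}_\pm$ ``canonically induce'' the $\ast$-algebra isomorphisms and states the definition on generators, whereas you have spelled out the universal-property justification and the verification of the relations in more detail than the paper itself provides.
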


This means that we can {\it mathematically} describe the NC QFT in terms of a 
formal power series extension of the corresponding commutative QFT.
 However, the {\it physical interpretation} has to be adapted properly: If we want to probe the 
NC QFT with a set of smearing functions $\lbrace [\varphi_i]\rbrace$ in order to extract
physical observables (e.g.~Wightman functions) we have to probe the commutative QFT with a different set of smearing 
functions $\lbrace \mathbf{S}_\pm([\varphi_i]) \rbrace$.

In \cite{Schenkel:2010jr} we have shown that $\ast$-algebra isomorphisms similar to Proposition \ref{propo:algebraiso}
also exist for a class of convergent deformations.
The difference there is that the deformed QFT is isomorphic to a reduced undeformed QFT, where certain strongly localized
observables are excluded.

\subsection{Symplectic automorphisms} 
An important class of symmetries of QFTs are those which are induced by symplectic automorphisms.
In particular, isometries of the background fall into this class.
\begin{defi}
\label{def:symplecticauto}
 Let $(W,\rho)$ be a symplectic $\mathbb{R}[[\lambda]]$-module. A map $\alpha\in\text{End}_{\mathbb{R}[[\lambda]]}(W)$ 
is called a symplectic automorphism, if it is invertible and if $\rho\bigl(\alpha\varphi,\alpha\psi\bigr)=\rho(\varphi,\psi)$, 
for all $\varphi,\psi\in W$.
The set $\mathcal{G}_{(W,\rho)}$ of all symplectic automorphisms with the usual composition of homomorphisms $\circ$ forms 
the group of symplectic automorphisms.
\end{defi}
Due to the symplectic isomorphisms $\mathbf{S}_\pm:V_\star\to V[[\lambda]]$ we find
\begin{propo}
\label{propo:symplecticgroup}
There exist group isomorphisms $\mathbf{S}_{\pm\mathcal{G}}:\mathcal{G}_{(V[[\lambda]],\omega)}\to\mathcal{G}_{(V_\star,\omega_\star)}
,~\alpha\mapsto \mathbf{S}_\pm^{-1}\circ \alpha\circ \mathbf{S}_\pm$.
\end{propo}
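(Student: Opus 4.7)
The plan is to verify this by the standard transport-of-structure argument: whenever two symplectic modules are symplectically isomorphic, conjugation by the isomorphism produces a group isomorphism of their symplectic automorphism groups. By Corollary \ref{cor:symplecto}, $\mathbf{S}_\pm:V_\star\to V[[\lambda]]$ is an $\mathbb{R}[[\lambda]]$-linear bijection satisfying $\omega(\mathbf{S}_\pm[\varphi],\mathbf{S}_\pm[\psi])=\omega_\star([\varphi],[\psi])$, so the formula $\alpha\mapsto \mathbf{S}_\pm^{-1}\circ\alpha\circ\mathbf{S}_\pm$ is at least well-defined as a map of endomorphisms.

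First I would check that the image of the map lies in $\mathcal{G}_{(V_\star,\omega_\star)}$. Given $\alpha\in\mathcal{G}_{(V[[\lambda]],\omega)}$, the composition $\mathbf{S}_\pm^{-1}\circ\alpha\circ\mathbf{S}_\pm$ is $\mathbb{R}[[\lambda]]$-linear (composition of $\mathbb{R}[[\lambda]]$-linear maps) and invertible with inverse $\mathbf{S}_\pm^{-1}\circ\alpha^{-1}\circ\mathbf{S}_\pm$. For the symplectic property, for all $[\varphi],[\psi]\in V_\star$,
\begin{flalign}
\omega_\star\bigl(\mathbf{S}_\pm^{-1}\alpha\,\mathbf{S}_\pm[\varphi],\mathbf{S}_\pm^{-1}\alpha\,\mathbf{S}_\pm[\psi]\bigr)
=\omega\bigl(\alpha\,\mathbf{S}_\pm[\varphi],\alpha\,\mathbf{S}_\pm[\psi]\bigr)
=\omega\bigl(\mathbf{S}_\pm[\varphi],\mathbf{S}_\pm[\psi]\bigr)=\omega_\star\bigl([\varphi],[\psi]\bigr),
\end{flalign}
using the symplectic isomorphism property of $\mathbf{S}_\pm$ in the first and last step and that $\alpha$ is a symplectic automorphism in the middle step.

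Next I would verify the group homomorphism property. For $\alpha,\beta\in\mathcal{G}_{(V[[\lambda]],\omega)}$, a direct computation gives
\begin{flalign}
\mathbf{S}_{\pm\mathcal{G}}(\alpha\circ\beta)=\mathbf{S}_\pm^{-1}\circ(\alpha\circ\beta)\circ\mathbf{S}_\pm
=(\mathbf{S}_\pm^{-1}\circ\alpha\circ\mathbf{S}_\pm)\circ(\mathbf{S}_\pm^{-1}\circ\beta\circ\mathbf{S}_\pm)
=\mathbf{S}_{\pm\mathcal{G}}(\alpha)\circ\mathbf{S}_{\pm\mathcal{G}}(\beta),
\end{flalign}
and clearly $\mathbf{S}_{\pm\mathcal{G}}(\mathrm{id}_{V[[\lambda]]})=\mathrm{id}_{V_\star}$. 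Finally, to show bijectivity, I would exhibit the inverse explicitly: the map $\mathcal{G}_{(V_\star,\omega_\star)}\to\mathcal{G}_{(V[[\lambda]],\omega)}$, $\alpha'\mapsto \mathbf{S}_\pm\circ\alpha'\circ\mathbf{S}_\pm^{-1}$ is constructed by the same argument (with roles of the two modules swapped, using that $\mathbf{S}_\pm^{-1}$ is also a symplectic isomorphism) and is immediately seen to be a two-sided inverse of $\mathbf{S}_{\pm\mathcal{G}}$.

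There is no genuine obstacle here; the statement is a purely formal consequence of Corollary \ref{cor:symplecto}. The only point worth a moment of attention is confirming that the conjugated map preserves the $\mathbb{R}[[\lambda]]$-module structure and the symplectic form, both of which follow from the $\mathbb{R}[[\lambda]]$-linearity and symplectic character of $\mathbf{S}_\pm$.
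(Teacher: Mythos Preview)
Your proof is correct and is exactly the transport-of-structure argument the paper has in mind: the proposition is stated there without proof, prefaced only by ``Due to the symplectic isomorphisms $\mathbf{S}_\pm:V_\star\to V[[\lambda]]$ we find'', so your verification simply spells out what the paper treats as immediate from Corollary~\ref{cor:symplecto}.
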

The symplectic automorphisms $\mathcal{G}_{(W,\rho)}$ of a symplectic $\mathbb{R}[[\lambda]]$-module $(W,\rho)$ canonically induce
$\ast$-algebra isomorphisms of the corresponding $\ast$-algebra of field polynomials $A_{(W,\rho)}$.
Thus, the deformed QFT enjoys the same amount of  symmetries as the undeformed one.
However, the transformations are represented in a non-canonical, and in general also non-geometric,
way by using the symplectic isomorphisms $\mathbf{S}_\pm$.

In \cite{Schenkel:2010jr} we have shown that Proposition \ref{propo:symplecticgroup} is restricted to the formal deformation
quantization setting. Convergent deformations in general break some of the symplectic automorphisms.

\subsection{Algebraic states}
Due to the symplectic isomorphisms $\mathbf{S}_\pm$ the space of algebraic states on $A_{(V_\star,\omega_\star)}$ and $A_{(V[[\lambda]],\omega)}$ can be related in a precise way. To explain this we require the following well-known
\begin{lem}
\label{lem:pullback}
 Let $A_1$ and $A_2$ be two unital $\ast$-algebras over $\mathbb{C}[[\lambda]]$ and let 
$\kappa:A_1 \to A_2$ be a $\ast$-algebra homomorphism. Then each state $\Omega_2$ on $A_2$
induces a state $\Omega_1$ on $A_1$ by defining
\begin{flalign}
 \Omega_1(a):= \Omega_2\bigl(\kappa (a)\bigr)~,
\end{flalign}
for all $a\in A_1$.
\end{lem}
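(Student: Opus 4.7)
The plan is to verify the three defining properties of a state directly for $\Omega_1 := \Omega_2 \circ \kappa$, using nothing more than the definitions of $\ast$-algebra homomorphism and of state over $\mathbb{C}[[\lambda]]$. Since $\kappa$ is a $\mathbb{C}[[\lambda]]$-linear map (a $\ast$-algebra homomorphism over $\mathbb{C}[[\lambda]]$ is in particular $\mathbb{C}[[\lambda]]$-linear) and $\Omega_2$ is $\mathbb{C}[[\lambda]]$-linear by assumption, the composition $\Omega_1 = \Omega_2 \circ \kappa : A_1 \to \mathbb{C}[[\lambda]]$ is automatically $\mathbb{C}[[\lambda]]$-linear. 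So the underlying object is well-defined.

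First I would check normalization. Because $\kappa$ is unital, $\kappa(1_{A_1}) = 1_{A_2}$, and therefore
\begin{flalign}
\Omega_1(1_{A_1}) = \Omega_2\bigl(\kappa(1_{A_1})\bigr) = \Omega_2(1_{A_2}) = 1~.
\end{flalign}
Next I would check positivity. For any $a \in A_1$, using that $\kappa$ preserves both the product and the $\ast$-involution, we have $\kappa(a^\ast a) = \kappa(a)^\ast \kappa(a)$, hence
\begin{flalign}
\Omega_1(a^\ast a) = \Omega_2\bigl(\kappa(a)^\ast \kappa(a)\bigr) \geq 0~,
\end{flalign}
since $\kappa(a) \in A_2$ and $\Omega_2$ is positive on $A_2$. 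Here positivity is understood in the sense of the ordering on $\mathbb{R}[[\lambda]]$ introduced earlier in the paper.

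There is no real obstacle: the claim follows entirely from unpacking the definitions, and no subtlety of the $\lambda$-adic or formal power series setting intervenes, because the ordering on $\mathbb{R}[[\lambda]]$ is defined purely in terms of the lowest non-vanishing coefficient and positivity of a pulled-back element is the same statement as positivity of its image under $\kappa$. The only point worth flagging explicitly in the write-up is that a $\ast$-algebra homomorphism over $\mathbb{C}[[\lambda]]$ is by convention taken to be $\mathbb{C}[[\lambda]]$-linear and unital, so that the two identities $\kappa(a^\ast a) = \kappa(a)^\ast \kappa(a)$ and $\kappa(1) = 1$ are available without further argument.
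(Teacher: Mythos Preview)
Your proof is correct and is exactly the standard direct verification of linearity, normalization, and positivity from the definitions. The paper itself does not give a proof of this lemma but simply refers to \cite{Schenkel:2010jr}, so there is nothing further to compare.
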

The proof of this standard lemma can be found e.g.~in \cite{Schenkel:2010jr}. 
The  state $\Omega_1$ is called the pull-back of $\Omega_2$.
Employing the $\ast$-algebra isomorphisms $\mathfrak{S}_\pm$ of Proposition \ref{propo:algebraiso}
we obtain the following
\begin{propo}
\label{propo:states}
The $\ast$-algebra isomorphisms $\mathfrak{S}_\pm$ provide bijections between the states on $A_{(V[[\lambda]],\omega)}$ 
and the states on $A_{(V_\star,\omega_\star)}$. 
The $\mathcal{G}_{(V[[\lambda]],\omega)}$-symmetric states on $A_{(V[[\lambda]],\omega)}$ are pulled-back 
to $\mathcal{G}_{(V_\star,\omega_\star)}$-symmetric states on $A_{(V_\star,\omega_\star)}$, and vice versa.
\end{propo}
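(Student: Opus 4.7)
The plan is to split the statement into two parts: (i) the bijection of state spaces, and (ii) the preservation of symmetry under pull-back. Both parts rely on Lemma \ref{lem:pullback} together with Proposition \ref{propo:algebraiso} and Proposition \ref{propo:symplecticgroup}.

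For part (i), I would simply apply Lemma \ref{lem:pullback} in both directions. Taking $\kappa = \mathfrak{S}_\pm$ sends a state $\Omega$ on $A_{(V[[\lambda]],\omega)}$ to the state $\Omega_\star := \Omega\circ \mathfrak{S}_\pm$ on $A_{(V_\star,\omega_\star)}$. Since $\mathfrak{S}_\pm$ is a $\ast$-algebra isomorphism by Proposition \ref{propo:algebraiso}, its inverse $\mathfrak{S}_\pm^{-1}$ is also a $\ast$-algebra homomorphism, so $\Omega_\star \mapsto \Omega_\star \circ \mathfrak{S}_\pm^{-1}$ gives the inverse map. Mutual invertibility is immediate from $\mathfrak{S}_\pm\circ\mathfrak{S}_\pm^{-1}=\mathrm{id}$ and $\mathfrak{S}_\pm^{-1}\circ\mathfrak{S}_\pm=\mathrm{id}$, so the pull-back is a bijection of state spaces.

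For part (ii), the essential step is an intertwining relation at the algebra level. Given $\alpha_\star\in\mathcal{G}_{(V_\star,\omega_\star)}$, Proposition \ref{propo:symplecticgroup} identifies it with $\alpha = \mathbf{S}_\pm\circ\alpha_\star\circ\mathbf{S}_\pm^{-1}\in\mathcal{G}_{(V[[\lambda]],\omega)}$. The canonically induced $\ast$-algebra automorphisms act on generators by $\widetilde{\alpha}_\star(\Phi_\star([\varphi]))=\Phi_\star(\alpha_\star[\varphi])$ and $\widetilde{\alpha}(\Phi([\varphi]))=\Phi(\alpha[\varphi])$. Evaluating on generators using the definition of $\mathfrak{S}_\pm$ yields
\begin{flalign*}
\mathfrak{S}_\pm\bigl(\widetilde{\alpha}_\star(\Phi_\star([\varphi]))\bigr)
= \Phi\bigl(\mathbf{S}_\pm\circ\alpha_\star[\varphi]\bigr)
= \Phi\bigl(\alpha\circ\mathbf{S}_\pm[\varphi]\bigr)
= \widetilde{\alpha}\bigl(\mathfrak{S}_\pm(\Phi_\star([\varphi]))\bigr),
\end{flalign*}
and extending multiplicatively gives $\mathfrak{S}_\pm\circ\widetilde{\alpha}_\star = \widetilde{\alpha}\circ\mathfrak{S}_\pm$ on all of $A_{(V_\star,\omega_\star)}$. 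Consequently, if $\Omega$ is $\mathcal{G}_{(V[[\lambda]],\omega)}$-symmetric, then
\begin{flalign*}
\Omega_\star\circ\widetilde{\alpha}_\star
= \Omega\circ\mathfrak{S}_\pm\circ\widetilde{\alpha}_\star
= \Omega\circ\widetilde{\alpha}\circ\mathfrak{S}_\pm
= \Omega\circ\mathfrak{S}_\pm
= \Omega_\star,
\end{flalign*}
and since $\mathbf{S}_{\pm\mathcal{G}}$ is a group isomorphism this covers all symmetries of $(V_\star,\omega_\star)$. The converse direction uses the same intertwining relation with $\mathfrak{S}_\pm$ replaced by $\mathfrak{S}_\pm^{-1}$.

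There is no real obstacle here; the only point to be careful about is that the group isomorphism $\mathbf{S}_{\pm\mathcal{G}}$ of Proposition \ref{propo:symplecticgroup} is the one naturally compatible with $\mathfrak{S}_\pm$, which is automatic because both $\mathfrak{S}_\pm$ and $\mathbf{S}_{\pm\mathcal{G}}$ are built from the same underlying symplectic isomorphism $\mathbf{S}_\pm$ of Corollary \ref{cor:symplecto}. Once the intertwining identity on generators is noted, everything reduces to routine application of Lemma \ref{lem:pullback}.
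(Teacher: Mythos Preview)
Your argument is correct. The paper itself does not provide a proof of this proposition but only refers to \cite{Schenkel:2010jr}; your two-step argument (bijection via Lemma~\ref{lem:pullback} applied to $\mathfrak{S}_\pm$ and $\mathfrak{S}_\pm^{-1}$, then symmetry via the intertwining relation $\mathfrak{S}_\pm\circ\widetilde{\alpha}_\star=\widetilde{\alpha}\circ\mathfrak{S}_\pm$ induced by Proposition~\ref{propo:symplecticgroup}) is the natural one and is exactly what one would expect to find there.
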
 
The proof of this statement can be found in \cite{Schenkel:2010jr}.

\section{\label{sec:conc}Conclusions and outlook}
In this proceedings article we have summarized our recently developed approach to QFT on NC curved spacetimes. We have constructed 
a deformed action functional for a real and free scalar field on NC curved spacetimes
 by employing methods from twist-deformed differential geometry. The deformed wave operator and its corresponding
 Green's operators were constructed explicitly in terms of formal power series. The solution space of the deformed wave equation 
 was constructed explicitly,  equipped with a symplectic structure and quantized in terms of $\ast$-algebras over the ring 
 $\mathbb{C}[[\lambda]]$. We have shown that the deformed symplectic $\mathbb{R}[[\lambda]]$-module is isomorphic, 
 via symplectic isomorphisms,
 to the formal power series extension of the undeformed symplectic vector space. A direct consequence of this symplectic isomorphism
 for the deformed QFT is that it is $\ast$-algebra isomorphic to the formal power series extension of the undeformed QFT.
 This immediately yields isomorphisms between the corresponding groups of symplectic automorphisms and bijections between
 the corresponding spaces of algebraic states.
 
 In future work it would be interesting to study more examples of convergent deformations and their properties, as it was already
 initiated in \cite{Schenkel:2010jr}. Furthermore, a more detailed study of QFTs on deformed
  cosmological and black hole spacetimes, see e.g.~\cite{Schenkel:2010sc}, and their phenomenology would be interesting.

\appendix
\section*{Acknowledgements}
I want to thank the organizers and participants of the  \textit{Corfu Summer Institute on Elementary Particles and Physics 2010} for 
this very interesting conference. Additionally I want to thank my friends, colleagues and collaborators
for the discussions and comments on this work.
This research is supported by Deutsche Forschungsgemeinschaft through the Research 
Training Group GRK\,1147 \textit{Theoretical Astrophysics and Particle Physics}.

\section{\label{app:equiv}Equivalence to \cite{Ohl:2009qe}}
We show that the formalism presented in this paper is equivalent to \cite{Ohl:2009qe}
by providing the corresponding isomorphisms. 
Quantities in the formulation [5] are distinguished from those of the present paper by a bar. 
\paragraph{Wave operators:}
As already stated in Remark \ref{rem:conventions} in Section \ref{sec:wave}, the scalar valued wave operator
$P_\star$ is related to the wave operator $\bar{P}_\star$ by the $\mathbb{C}[[\lambda]]$-module isomorphism
\begin{flalign}
\label{eqn:hodgeiso}
\iota:C^\infty(\mathcal{M})[[\lambda]]\to C^\infty(\mathcal{M})[[\lambda]]\,,~\varphi\mapsto 
\star_g^{-1}\bigl(\varphi\star \vol_\star\bigr)~,
\end{flalign}
where $\star_g$ is the undeformed Hodge operator corresponding to $g$. More precisely, we have
\begin{flalign}
 P_\star = \iota\circ \bar{P}_\star~.
\end{flalign}
The deformed and undeformed scalar products are related by
\begin{subequations}
\label{eqn:scalarproductrelation}
\begin{flalign}
 &\bigl(\varphi,\iota(\psi)\bigr)=\int_\mathcal{M}\varphi^\ast\, \iota(\psi)\,\vol_g = 
\int_\mathcal{M}\varphi^\ast\,\left( \psi\star\vol_\star\right)
\stackrel{\text{(\ref{eqn:gradcyc})}}{=}\int_\mathcal{M}\varphi^\ast\star \psi\star\vol_\star=\bigl(\varphi,\psi\bigr)_\star~,\\
&\bigl(\iota(\varphi),\psi\bigr)=\int_\mathcal{M}\left(\iota(\varphi)\,\vol_g\right)^\ast\, \psi = 
\int_\mathcal{M}\left(\varphi\star \vol_\star\right)^\ast\,\psi
\stackrel{\text{(\ref{eqn:gradcyc})}}{=}\int_\mathcal{M}\varphi^\ast\star \psi\star\vol_\star=\bigl(\varphi,\psi\bigr)_\star~,
\end{flalign}
\end{subequations}
for all $\varphi,\psi\in C^\infty(\mathcal{M})[[\lambda]]$. Let $\bar{P}_\star$ be formally self-adjoint with respect to
$(~,~)_\star$, then $P_\star = \iota\circ \bar{P}_\star$ is formally self-adjoint with respect to $(~,~)$,
since
\begin{flalign}
\bigl(\varphi,P_\star(\psi)\bigr)= \bigl(\varphi,\bar{P}_\star(\psi)\bigr)_\star = \bigl(\bar{P}_\star(\varphi),\psi\bigr)_\star =
\bigl(P_\star(\varphi),\psi\bigr)~.
\end{flalign}
The reverse direction is shown analogously.
\paragraph{Green's operators:}
Let $\bar{\Delta}_{\star\pm}$ be the Green's operators corresponding to $\bar{P}_\star$. Then
the Green's operators corresponding to $P_\star=\iota\circ\bar{P}_\star$ are given by
\begin{flalign}
 \Delta_{\star\pm} = \bar{\Delta}_{\star\pm}\circ \iota^{-1}~.
\end{flalign}
Thus, the retarded-advanced Green's operator is given by
\begin{flalign}
 \Delta_\star = \bar{\Delta}_\star\circ \iota^{-1}~.
\end{flalign}
\paragraph{Symplectic \text{$\mathbb{R}[[\lambda]]$}-modules:}
In \cite{Ohl:2009qe} we have defined the $\mathbb{R}[[\lambda]]$-module
\begin{flalign}
 H :=\lbrace \varphi\in C^\infty_0(\mathcal{M})[[\lambda]]: \bigl(\bar{\Delta}_{\star\pm}(\varphi)\bigr)^\ast 
=\bar{\Delta}_{\star\pm}(\varphi) \rbrace~,
\end{flalign}
which was used as a pre-symplectic $\mathbb{R}[[\lambda]]$-module.
This space is isomorphic to $C^\infty_0(\mathcal{M},\mathbb{R})[[\lambda]]$ via the isomorphism $\iota^{-1}$.
To see this let $\varphi\in C^\infty_0(\mathcal{M},\mathbb{R})[[\lambda]]$ be arbitrary, then $\iota^{-1}(\varphi)\in H$ since
\begin{flalign}
 \bigl(\bar{\Delta}_{\star\pm}\left(\iota^{-1}(\varphi)\right)\bigr)^\ast = \bigl(\Delta_{\star\pm}(\varphi)\bigr)^\ast = 
\Delta_{\star\pm}(\varphi) = \bar{\Delta}_{\star\pm}\left(\iota^{-1}(\varphi)\right)~.
\end{flalign}
Let now $\varphi\in H$ be arbitrary, then there is a $\psi\in C^\infty_0(\mathcal{M})[[\lambda]]$, such that
$\varphi = \iota^{-1}(\psi)$. We find that $\psi$ is real, since
\begin{flalign}
 0=\bigl(\bar{\Delta}_{\star\pm}(\varphi)\bigr)^\ast-\bar{\Delta}_{\star\pm}(\varphi) = \Delta_{\star\pm}\left(\psi^\ast-\psi\right)
\quad\Rightarrow\quad \psi^\ast=\psi~.
\end{flalign}

The symplectic $\mathbb{R}[[\lambda]]$-module in \cite{Ohl:2009qe} was defined by 
$\bar{V}_\star:= H/\bar{P}_\star[C^\infty_0(\mathcal{M},\mathbb{R})[[\lambda]]]$, while 
$V_\star = C^\infty_0(\mathcal{M},\mathbb{R})[[\lambda]]/ P_\star[C^\infty_0(\mathcal{M},\mathbb{R})[[\lambda]]]$. The isomorphism
$\iota: H\to C^\infty_0(\mathcal{M},\mathbb{R})[[\lambda]]$ gives rise to an isomorphism between the factor spaces, since
\begin{flalign}
 \iota\bigl(\varphi + \bar{P}_\star(\psi)\bigr) = \iota(\varphi) + P_\star(\psi)~,
\end{flalign}
for all $\varphi\in H$ and $\psi\in C^\infty_0(\mathcal{M},\mathbb{R})[[\lambda]]$.
The symplectic structure on $\bar{V}_\star$ and $V_\star$ is given by
\begin{subequations}
\begin{flalign}
 \bar{\omega}_\star([\varphi],[\psi]) &= \bigl(\varphi, \bar{\Delta}_\star(\psi)\bigr)_\star~,\\
 \omega_\star([\varphi],[\psi]) &= \bigl(\varphi,\Delta_\star(\psi)\bigr)~.
\end{flalign}
\end{subequations}
We obtain that the map $\iota:\bar{V}_\star\to V_\star$ is a symplectic isomorphism
\begin{flalign}
 \omega_\star([\iota(\varphi)],[\iota(\psi)]) = \bigl(\iota(\varphi),\Delta_\star(\iota(\psi))\bigr) 
\stackrel{\text{(\ref{eqn:scalarproductrelation})}}{=} \bigl(\varphi,\bar{\Delta}_\star(\psi)\bigr)_\star
 = \bar{\omega}_\star([\varphi],[\psi])~.
\end{flalign}
\paragraph{$\ast$-algebras of field polynomials:} The symplectic isomorphism $\iota$ immediately leads
to a $\ast$-algebra isomorphism between $A_{(\bar{V}_\star,\bar{\omega}_\star)}$ and $A_{(V_\star,\omega_\star)}$.
Thus, the approach presented in this paper leads to a QFT which is mathematically equivalent to the one
obtained in \cite{Ohl:2009qe}.
However, the physical interpretation has to be adapted properly: The operators $\Phi_\star(\varphi)\in A_{(V_\star,\omega_\star)}$
should be interpreted as smeared field operators with respect to the smearing 
``~$\Phi_\star(\varphi)= \int_\mathcal{M} \Phi\,\varphi\,\vol_g$~'', while the operators 
$\bar\Phi_\star(\varphi)\in A_{(\bar{V}_\star,\bar{\omega}_\star)}$
should be interpreted as smeared with the $\star$-products 
``~$\bar{\Phi}_\star(\varphi)=\int_\mathcal{M}\Phi\star\varphi\star\vol_\star$~''.

\end{document}